\definecolor{myorange}{rgb}{.984,.392,.016}
\definecolor{myblue}{rgb}{.106,.2,.322}
\tikzset{
cross/.style={cross out, draw=black, minimum size=2*(#1-\pgflinewidth), inner sep=0pt, outer sep=0pt},
cross/.default={1pt}}
\newcommand{\R}{\mathbb{R}}
\newcommand{\Z}{\mathbb{Z}}
\newcommand{\Gaussian}{\mathbb{N}}
\newcommand{\SymMat}{\mathbb{S}}
\newcommand{\Exp}{\mathbb{E}}
\newcommand{\trace}{\mathrm{Tr}}
\newcommand{\diag}{\mathrm{diag}}
\newcommand{\cov}{\mathrm{cov}}
\newcommand{\nn}{\nonumber}
\DeclareMathOperator*{\argmin}{argmin\;}
\newcommand{\playerR}{\texttt{R}}
\newcommand{\playerS}{\texttt{S}}
\newcommand{\bU}{\bar{U}}
\newcommand{\bLambda}{\bar{\Lambda}}
\newcommand{\blambda}{\bar{\lambda}}
\newcommand{\tQ}{\tilde{Q}}
\newcommand{\game}{{\cal G}}
\newcommand{\calL}{{\cal L}}
\newcommand{\calX}{{\cal X}}
\newcommand{\calY}{{\cal Y}}
\newcommand{\calS}{{\cal S}}
\newcommand{\calU}{{\cal U}}
\newcommand{\ra}{\pmb{a}}
\newcommand{\rb}{\pmb{b}}
\newcommand{\rn}{\pmb{n}}
\newcommand{\rx}{\pmb{x}}
\newcommand{\rs}{\pmb{s}}
\newcommand{\rt}{\pmb{t}}
\newcommand{\ru}{\pmb{u}}
\newcommand{\rv}{\pmb{v}}
\newcommand{\ry}{\pmb{y}}
\newcommand{\rz}{\pmb{z}}
\newcommand{\rw}{\pmb{w}}
\newcommand{\vZeros}{\mathbf{0}}
\newcommand{\teta}{\tilde{\eta}}
\begin{document}

\title*{Deception-As-Defense Framework for Cyber-Physical Systems}
\author{Muhammed O. Sayin and Tamer Ba\c{s}ar}
\institute{Muhammed O. Sayin and Tamer Ba\c{s}ar \at Coordinated Science Laboratory, University of Illinois at Urbana-Champaign, 1308 West Main St., Urbana, IL, 61801, USA, \email{sayin2@illinois.edu},\email{basar1@illinois.edu}}

\maketitle

\abstract{We introduce deceptive signaling framework as a new defense measure against advanced adversaries in cyber-physical systems. In general, adversaries look for system-related information, e.g., the underlying state of the system, in order to learn the system dynamics and to receive useful feedback regarding the success/failure of their actions so as to carry out their malicious task. To this end, we craft the information that is accessible to adversaries strategically in order to control their actions in a way that will benefit the system, indirectly and without any explicit enforcement. Under the solution concept of game-theoretic hierarchical equilibrium, we arrive at a semi-definite programming problem equivalent to the infinite-dimensional optimization problem faced by the defender while selecting the best strategy when the information of interest is Gaussian and both sides have quadratic cost functions. The equivalence result holds also for the scenarios where the defender can have partial or noisy measurements or the objective of the adversary is not known. We show the optimality of linear signaling rule within the general class of measurable policies in communication scenarios and also compute the optimal linear signaling rule in control scenarios.}

\section{Introduction}
\label{sec:introduction}

\begin{quote}
All warfare is based on deception. Hence, when we are able to attack, we must seem unable; when using our forces, we must appear inactive; when we are near, we must make the enemy believe we are far away; when far away, we must make him believe we are near. \\
\null\hfill- Sun Tzu, The Art of War \cite{ref:Sunzi03}
\end{quote}

As quoted above, even the earliest known work on military strategy and war, The Art of War, emphasizes the importance of deception in security. Deception can be used as a defense strategy by making the opponent/adversary to perceive certain information of interest in an engineered way. Indeed, deception is also not limited to hostile environments. In all non-cooperative multi-agent environments, as long as there is asymmetry of information and one agent is informed about the information of interest while the other is not, then the informed agent has power on the uninformed one to manipulate his/her decisions or perceptions by sharing that information strategically. 

Especially with the introduction of cyber connectedness in physical systems, certain communication and control systems can be viewed as multi-agent environments, where each agent makes rational decisions to fulfill certain objectives. As an example, we can view transmitters (or sensors) and receivers (or controllers) as individual agents in communication (or control) systems. However, classical communication and control theory is based on the cooperation between these agents to meet certain challenges together, such as in mitigating the impact of a noisy channel in communication or in stabilizing the underlying state of a system around an equilibrium through feedback in control. However, cyber connectedness makes these multi-agent environments vulnerable against adversarial interventions and there is an inherent asymmetry of information as the information flows from transmitters (or sensors) to receivers (or controllers)\footnote{In control systems, we can also view the control input as information that flows implicitly from the controllers to the sensors since it impacts the underlying state and correspondingly the sensors' measurements.}. Therefore, if these agents are not cooperating, e.g., due to adversarial intervention, then the informed agents, i.e., transmitters or sensors, could seek to deceive the uninformed ones, i.e., receivers or controllers, so that they would perceive the underlying information of interest in a way the deceiver has desired, and correspondingly would take the manipulated actions.

Our goal, here, is to craft the information that could be available to an adversary in order to control his/her perception about the underlying state of the system as a defensive measure. The malicious objective and the normal operation of the system may not be completely opposite of each other as in the framework of a zero-sum game, which implies that there is a part of malicious objective that is benign and the adversary would be acting in line with the system's interest with respect to that aligned part of the objectives. If we can somehow restrain the adversarial actions to fulfill only the aligned part, then the adversarial actions, i.e., the attack, could inadvertently end up helping the system toward its goal. Since a rational adversary would make decisions based on the information available to him, the strategic crafting of the signal that is shared with the adversary, or the adversary can have access to, can be effective in that respect. Therefore, our goal is to design the information flowing from the informed agents, e.g., sensors, to the uninformed ones, e.g., controllers, in view of the possibility of adversarial intervention, so as to control the perception of the adversaries about the underlying system, and correspondingly to {\em persuade} them (without any explicit enforcement) to fulfill the aligned parts of the objectives as much as possible without fulfilling the misaligned parts. 

In this chapter, we provide an overview of the recent results  \cite{ref:Sayin17b,ref:Sayin17a,ref:Sayin19a} addressing certain aspects of this challenge in non-cooperative communication and control settings. For a discrete-time Gauss Markov process, and when the sender and the receiver in a non-cooperative communication setting have misaligned quadratic objectives, in \cite{ref:Sayin17b}, we have shown the optimality of linear signaling rules\footnote{We use the terms ``strategy", ``signaling/decision rule", and ``policy" interchangeably.} within the general class of measurable policies and provided an algorithm to compute the optimal policies numerically. Also in \cite{ref:Sayin17b}, we have formulated the optimal linear signaling rule in a non-cooperative linear-quadratic-Gaussian (LQG) control setting when the sensor and the controller have known misaligned control objectives. In \cite{ref:Sayin17a}, we have introduced a secure sensor design framework, where we have addressed the optimal linear signaling rule again in a non-cooperative LQG setting when the sensor and private-type controller have misaligned control objectives in a Bayesian setting, i.e., the distribution over the private type of the controller is known. In \cite{ref:Sayin19a}, we have addressed the optimal linear robust signaling in a non-Bayesian setting, where the distribution over the private type of the controller is not known, and provided a comprehensive formulation by considering also the cases where the sensor could have partial or noisy information on the signal of interest and relevance. We elaborate further on these results in some detail throughout the chapter.

In Section \ref{sec:review}, we review the related literature in economics and engineering. In Sections \ref{sec:framework} and \ref{sec:game}, we introduce the framework and formulate the deception-as-defense game, respectively. In Section \ref{sec:Gaussian}, we elaborate on Gaussian information of interest in detail. In Sections \ref{sec:communication} and \ref{sec:control}, we address the optimal signaling rules in non-cooperative communication and control systems. In Section \ref{sec:uncertainty}, we provide the optimal signaling rule against the worst possible distribution over the private types of the uninformed agent. In Section \ref{sec:noisy}, we extend the results to partial or noisy measurements of the underlying information of interest. Finally, we conclude the chapter in Section \ref{sec:conclusion} with several remarks and possible research directions.

{\em Notation:} Random variables are denoted by bold lower case letters, e.g., $\rx$. For a random vector $\rx$, $\cov\{\rx\}$ denotes the corresponding covariance matrix. For an ordered set of parameters, e.g., $x_1,\ldots,x_{\kappa}$, we use the notation $x_{k:l} = x_l,\ldots,x_k$, where $1\leq l \leq k \leq \kappa$. $\Gaussian(0,\cdot)$ denotes the multivariate Gaussian distribution with zero mean and designated covariance. For a vector $x$ and a matrix $A$, $x'$ and $A'$ denote their transposes, and $\|x\|$ denotes the Euclidean $\ell_2$-norm of the vector $x$. For a matrix $A$, $\trace\{A\}$ denotes its trace. We denote the identity and zero matrices with the associated dimensions by $I$ and $O$, respectively. $\SymMat^m$ denotes the set of $m$-by-$m$ symmetric matrices. For positive semi-definite matrices $A$ and $B$, $A\succeq B$ means that $A-B$ is also positive semi-definite. 

\section{Deception Theory in Literature}\label{sec:review}

There are various definitions of deception. Depending on the specific definition at hand, the analysis or the related applications vary. Commonly in signaling-based deception definitions, there is an information of interest private to an informed agent whereas an uninformed agent may benefit from that information to make a certain decision. If the informed and uninformed agents are strategic while, respectively, sharing information and making a decision, then the interaction can turn into a game where the agents select their strategies according to their own objectives while taking into account the fact that the other agent would also have selected his/her strategy according to his/her different objective. Correspondingly, such an interaction between the informed and uninformed agents can be analyzed under a game-theoretic solution concept. Note that there is a main distinction between incentive compatible deception model and deception model with policy commitment.

\begin{definition}
We say that a deception model is {\bf incentive compatible} if neither the informed nor the uninformed agent have an incentive to deviate from their strategies unilaterally. 
\end{definition}

The associated solution concept here is Nash equilibrium \cite{ref:Basar99}. Existence of a Nash equilibrium is not guaranteed in general. Furthermore, even if it exists, there may also be multiple Nash equilibria. Without certain commitments, any of the equilibria may not be realized or if one has been realized, which of them would be realized is not certain beforehand since different ones could be favorable for different players.  

\begin{definition}
We say that in a deception model, there is {\bf policy commitment} if either the informed or the uninformed agent commits to play a certain strategy beforehand and the other agent reacts being aware of the committed strategy. 
\end{definition}

The associated solution concept is Stackelberg equilibrium, where one of the players leads the game by announcing his/her committed strategy \cite{ref:Basar99}. Existence of a Stackelberg equilibrium is not guaranteed in general over unbounded strategy spaces. However, if it exists, all the equilibria would lead to the same game outcome for the leader of the game since the leader could have always selected the favorable one among them. We also note that if there is a favorable outcome for the leader in the incentive compatible model, the leader has the freedom to commit to that policy in the latter model. Correspondingly, the leader is advantageous by acting first to commit to play according to a certain strategy even though the result may not be incentive compatible. 

Game theoretical analysis of deception has attracted substantial interest in various disciplines, including economics and engineering fields. In the following subsections, we review the literature in these disciplines with respect to models involving incentive compatibility and policy commitment.   

\subsection{Economics Literature}

The scheme of the type introduced above, called strategic information transmission, was introduced in a seminal paper by V. Crawford and J. Sobel in \cite{ref:Crawford82}. This has attracted significant attention in the economics literature due to the wide range of relevant applications, from advertising to expert advise sharing. In the model adopted in \cite{ref:Crawford82}, the informed agent's objective function includes a commonly known bias term different from the uninformed agent's objective. That bias term can be viewed as the misalignment factor in-between the two objectives. For the incentive compatible model, the authors have shown that all equilibria are partition equilibria, where the informed agent controls the resolution of the information shared via certain quantization schemes, under certain assumptions on the objective functions (satisfied by quadratic objectives), and the assumption that the information of interest is drawn from a bounded support.

Following this inaugural introduction of the strategic information transmission framework, also called {\em cheap talk} due to the costless communication over an ideal channel, different settings, such as
\begin{itemize}
\item Single sender and multiple receivers \cite{ref:Farrell86,ref:Farrell96},
\item Multiple senders and single receiver \cite{ref:Gilligan89,ref:Krishna00},
\item Repeated games \cite{ref:Morris01}, 
\end{itemize}
have been studied extensively; however, all have considered the scenarios where the underlying information is one-dimensional, e.g., a real number. However, multi-dimensional information can lead to interesting results like full revelation of the information even when the misalignment between the objectives is arbitrarily large if there are multiple senders with different bias terms, i.e., misalignment factors \cite{ref:Battaglini02}. Furthermore, if there is only one sender yet multidimensional information, there can be full revelation of information at certain dimensions while at the other dimensions, the sender signals partially in a partition equilibrium depending on the misalignment between the objectives \cite{ref:Battaglini02}. 

The ensuing studies \cite{ref:Farrell86,ref:Gilligan89,ref:Farrell96,ref:Krishna00,ref:Morris01,ref:Battaglini02} on cheap talk \cite{ref:Crawford82} have analyzed the incentive compatibility of the players. More recently, in \cite{ref:Kamenica11}, the authors have proposed to use a deception model with policy commitment. They call it ``sender-preferred sub-game perfect equilibrium" since the sender cannot distort or conceal information once the signal realization is known, which can be viewed as the sender revealing and committing to the signaling rule in addition to the corresponding signal realization. For information of interest drawn from a compact metric space, the authors have provided necessary and sufficient conditions for the existence of a strategic signal that can benefit the informed agent, and characterized the corresponding optimal signaling rule. Furthermore, in \cite{ref:Tamura14}, the author has shown the optimality of linear signaling rules for multivariate Gaussian information of interest and with quadratic objective functions.

\subsection{Engineering Literature}

There exist various engineering applications depending on the definition of deception. Reference \cite{ref:Pawlick17} provides a taxonomy of these studies with a specific focus on security. Obfuscation techniques to hide valuable information, e.g., via externally introduced noise \cite{ref:Howe09,ref:Clark12, ref:Zhu12} can also be viewed as deception based defense. As an example, in \cite{ref:Howe09}, the authors have provided a browser extension that can obfuscate user's real queries by including automatically-fabricated queries to preserve privacy. Here, however, we specifically focus on signaling-based deception applications, in which we craft the information available to adversaries to control their perception rather than corrupting it. In line with the browser extension example, our goal is to persuade the query trackers to perceive the user behavior in a certain fabricated way rather than limiting their ability to learn the actual user behavior.   

In computer security, various (heuristic) deception techniques, e.g., honeypots and honey nets, are prevalent to make the adversary perceive a honey-system as the real one or a real system as a honey-one \cite{ref:Spitzner02}. Several studies, e.g., \cite{ref:Carroll11}, have analyzed honeypots within the framework of binary signaling games by abstracting the complexity of crafting a real system to be perceived as a honeypot (or crafting a honeypot to be perceived as a real system) to binary signals. However, here, our goal is to address the optimal way to craft the underlying information of interest with a continuum support, e.g., a Gaussian state. 

The recent study \cite{ref:Saritas17} addresses strategic information transmission of multivariate Gaussian information over an additive Gaussian noise channel for quadratic misaligned cost functions and identifies the conditions where the signaling rule attaining a Nash equilibrium can be a linear function. Recall that for scalar case, when there is no noisy channel in-between, all the equilibria are partition equilibria, implying all the signaling rules attaining a Nash equilibrium are nonlinear except babbling equilibrium, where the informed agent discloses no information \cite{ref:Crawford82}. Two other recent studies \cite{ref:Akyol17} and \cite{ref:Farokhi17} address strategic information transmission for the scenarios where the bias term is not common knowledge of the players and the solution concept is Stackelberg equilibrium rather than Nash equilibrium. They have shown that the Stackelberg equilibrium could be attained by linear signaling rules under certain conditions, different from the partition equilibria in the incentive compatible cheap talk model \cite{ref:Crawford82}. In \cite{ref:Farokhi17}, the authors have studied strategic sensor networks for multivariate Gaussian information of interest and with myopic quadratic objective functions in dynamic environments and by restricting the receiver's strategies to affine functions. In \cite{ref:Akyol17}, for jointly Gaussian scalar private information and bias variable, the authors have shown that optimal sender strategies are linear functions within the general class of measurable policies for misaligned quadratic cost functions when there is an additive Gaussian noise channel and hard power constraint on the signal, i.e., when it is no longer cheap talk.

\section{Deception-As-Defense Framework}
\label{sec:framework}

\begin{figure}[t!]
\includegraphics[width=\textwidth]{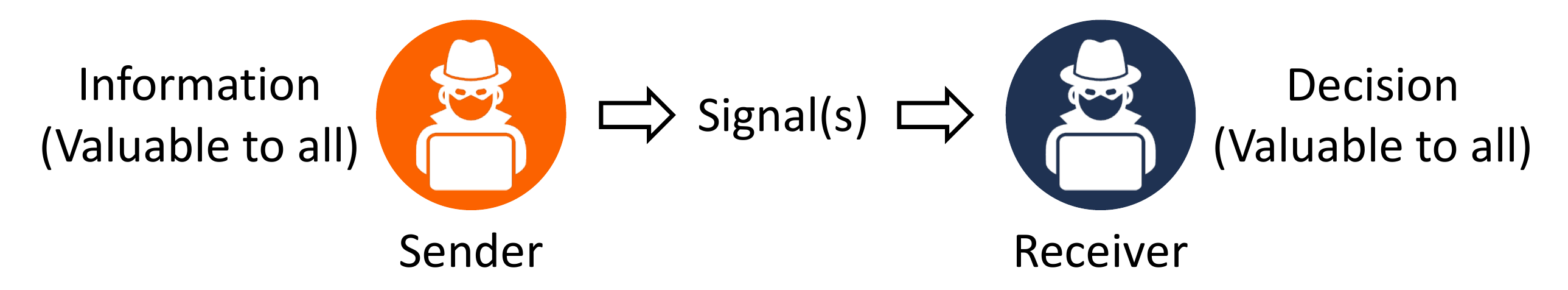}
\caption{Strategic information disclosure.} 
\label{fig:model}
\end{figure}

Consider a multi-agent environment with asymmetry of information, where each agent is a selfish decision maker taking action or actions to fulfill his/her own objective only while actions of any agent could impact the objectives of the others. As an example, Fig. \ref{fig:model} illustrates a scenario with two agents: Sender (\playerS) and Receiver (\playerR), where \playerS~has access to (possibly partial or noisy version of) certain information valuable to \playerR, and \playerS~sends a signal or signals related to the information of interest to \playerR. 

\begin{definition}
We say that an informed agent (or the signal the agent crafts) is {\bf deceptive} if he/she shapes the information of interest private to him/her strategically in order to control the perception of the uninformed agent by {\em removing}, {\em changing}, or {\em adding contents}. 
\end{definition}

Deceptive signaling can play a key role in multi-agent non-cooperative environments as well as in cooperative ones, where certain (uninformed) agents could have been compromised by certain adversaries. In such scenarios, informed agents can signal strategically to the uninformed ones in case they could have been compromised. Furthermore, deceiving an adversary to act, or attack the system in a way aligned with the system's goals can be viewed as being too optimistic due to the very definition of adversary. However, an adversary can also be viewed as a selfish decision maker seeking to satisfy a certain malicious objective, which may not necessarily be completely conflicting with the system's objective. This now leads to the following notion of ``deception-as-defense". 

\begin{definition}
We say that an informed agent engages in a {\bf deception-as-defense} mode of operation if he/she crafts the information of interest strategically to persuade the uninformed malicious agent (without any explicit enforcement) to act in line with the aligned part of the objective as much as possible without taking into account the misaligned part.  
\end{definition}

We re-emphasize that this approach differs from the approaches that seek to raise suspicion on the information of interest to sabotage the adversaries' malicious objectives. Sabotaging the adversaries' malicious objectives may not necessarily be the best option for the informed agent unless the objectives are completely opposite of each other. In this latter case, the deception-as-defense framework actually ends up seeking to sabotage the adversaries' malicious objectives. 

We also note that this approach differs from lying, i.e., the scenario where the informed agent provides a totally different information (correlated or not) {\em as if} it is the information of interest. Lying could be effective, as expected, as long as the uninformed agent trusts the legitimacy of the provided information. However, in non-cooperative environments, this could turn into a game where the uninformed agent becomes aware of the possibility of lying. This correspondingly raises suspicion on the legitimacy of the shared information and could end up sabotaging the adversaries' malicious objectives rather than controlling their perception of the information of interest. 

Once a defense mechanism has been widely deployed, this can cause the advanced adversaries learn the defense policy in the course of time. Correspondingly, the solution concept of policy commitment model can address this possibility in the deception-as-defense framework in a robust way if the defender commits to a certain policy that takes into account the best reaction of the adversaries that are aware of the policy. Furthermore, the transparency of the signal sent via the committed policy generates a trust-based relationship in-between \playerS~and \playerR, which is powerful to persuade \playerR~to make certain decisions inadvertently without any explicit enforcement by \playerS. 

\section{Game Formulation}\label{sec:game}

The information of interest is considered to be a realization of a known, continuous, random variable in static settings or a known (discrete-time) random process in dynamic settings. Since the static setting is a special case of the dynamic setting, we formulate the game in a dynamic, i.e., multi-stage, environment. We denote the information of interest by $\{\rx_k\in\calX\}$, where $\calX\subset \R^m$ denotes its support. Let $\{\rx_k\}$ have zero mean and (finite) second-order moment $\Sigma_k := \cov\{\rx_k\}\in\SymMat^{m}$. We consider the scenarios where each agent has perfect recall and constructs his/her strategy accordingly. \playerS~has access to a possibly partial or noisy version of the information of interest, $\rx_k$. We denote the noisy measurement of $\rx_k$ by $\ry_k\in\calY$, where $\calY\subset\R^m$ denotes its support. For each instance of the information of interest, \playerS~selects his/her signal as a second-order random variable 
\begin{equation}
\rs_k = \eta_k(\ry_{1:k}),
\end{equation}
correlated with $\ry_{1:k}$, but not necessarily determined through a deterministic transformation on $\ry_{1:k}$ (i.e., $\eta_k(\cdot)$ is in general a random mapping). Let us denote the set of all signaling rules by $\Upsilon_k$. As we will show later, when we allow for such randomness in the signaling rule, under certain conditions the solution turns out to be a linear function of the underlying information $\ry_{1:k}$ and an additive independent noise term. Due to the policy commitment by \playerS, at each instant, with perfect recall, \playerR~selects a Borel measurable decision rule $\gamma_k:\calS^k\rightarrow\calU$, where $\calU\subset\R^{r}$, from a certain policy space $\Gamma_k$ in order to make a decision\begin{equation}
\ru_k = \gamma_k(\rs_{1:k}),
\end{equation}  
knowing the signaling rules $\{\eta_k\}$ and observing the signals sent $\rs_{1:k}$.

Let $\kappa$ denote the length of the horizon. We consider that the agents have cost functions to minimize, instead of utility functions to maximize. Clearly, the framework could also be formulated accordingly for utility maximization rather straightforwardly. Furthermore, we specifically consider that the agents have quadratic cost functions, denoted by $U_{\playerS}(\eta_{1:\kappa},\gamma_{1:\kappa})$ and $U_{\playerR}(\eta_{1:\kappa},\gamma_{1:\kappa})$. 

\begin{example}{An Example in Non-cooperative Communication Systems}
Over a finite horizon with length $\kappa$, \playerS~seeks to minimize over $\eta_{1:\kappa}\in\Upsilon := \bigtimes_{k=1}^{\kappa}\Upsilon_{k}$
\begin{eqnarray}
U_{\playerS}(\eta_{1:\kappa},\gamma_{1:\kappa}) &= &\Exp\left\{\sum_{k=1}^{\kappa} \|Q_{\playerS}\rx_k - R_{\playerS}\gamma_k(\eta_1(\ry_1),\ldots,\eta_k(\ry_{1:k}))\|^2\right\}\nn\\
&= &\Exp\left\{\sum_{k=1}^{\kappa} \|Q_{\playerS}\rx_k - R_{\playerS}\ru_k\|^2\right\},\label{eq:SobjComm}
\end{eqnarray}
by taking into account that \playerR~seeks to minimize over $\gamma_{1:\kappa}\in\Gamma:=\bigtimes_{k=1}^{\kappa}\Gamma_{k}$
\begin{eqnarray}\label{eq:RobjComm}
U_{\playerR}(\eta_{1:\kappa},\gamma_{1:\kappa}) &= &\Exp\left\{\sum_{k=1}^{\kappa} \|Q_{\playerR}\rx_k - R_{\playerR}\gamma_k(\eta_1(\ry_1),\ldots,\eta_k(\ry_{1:k}))\|^2\right\}\nn\\
&= &\Exp\left\{\sum_{k=1}^{\kappa} \|Q_{\playerR}\rx_k - R_{\playerR}\ru_k\|^2\right\},
\end{eqnarray}
where the weight matrices are arbitrary (but fixed). The following special case illustrates the applicability of this general structure of misaligned objectives \eqref{eq:SobjComm} and \eqref{eq:RobjComm}. Suppose that the information of interest consists of two separate processes $\{\rz_k\}$ and $\{\rt_k\}$, e.g., $\rx_k := \begin{bmatrix} \rz_k' & \rt_k' \end{bmatrix}'$. Then \eqref{eq:SobjComm} and \eqref{eq:RobjComm} cover the scenarios where 
\playerR~seeks to estimate $\rz_k$ by minimizing
\begin{equation}\label{eq:RobjComm2}
\Exp\left\{\sum_{k=1}^{\kappa} \|\rz_k - \ru_k\|^2\right\},
\end{equation}
whereas \playerS~wants \playerR~to perceive $\rz_k$ as $\rt_k$, and end up minimizing
\begin{equation}\label{eq:SobjComm2}
\Exp\left\{\sum_{k=1}^{\kappa} \|\rt_k - \ru_k\|^2\right\}.
\end{equation}
\end{example}

\begin{example}{An Example in Non-cooperative Control Systems}\label{example:control}
Consider a controlled Markov process, e.g., 
\begin{equation}\label{eq:stateControlled}
\rx_{k+1} = A\rx_{k} + B\ru_k +\rw_k,
\end{equation}
where $\rw_k\sim\Gaussian(0,\Sigma_w)$ is a white Gaussian noise process. \playerS~seeks to minimize over $\eta_{1:\kappa}\in\Upsilon$
\begin{equation}\label{eq:SobjCont}
U_{\playerS}(\eta_{1:\kappa},\gamma_{1:\kappa}) = \Exp\left\{\sum_{k=1}^{\kappa} \rx_{k+1}'Q_{\playerS}^{}\rx_{k+1}^{} + \ru_k' R_{\playerS}^{}\ru_k^{}\right\},
\end{equation}
by taking into account that \playerR~seeks to minimize over $\gamma_{1:\kappa}\in\Gamma$
\begin{equation}\label{eq:RobjCont}
U_{\playerR}(\eta_{1:\kappa},\gamma_{1:\kappa}) = \Exp\left\{\sum_{k=1}^{\kappa} \rx_{k+1}'Q_{\playerR}^{}\rx_{k+1}^{} + \ru_k' R_{\playerR}^{}\ru_k^{}\right\},
\end{equation}
with arbitrary (but fixed) positive semi-definite matrices $Q_{\playerS}$ and $Q_{\playerR}$, and positive-definite matrices $R_{\playerS}$ and $R_{\playerR}$. Similar to the example in communication systems, this general structure of misaligned objectives \eqref{eq:SobjCont} and \eqref{eq:RobjCont} can bring in interesting applications. Suppose the information of interest consists of two separate processes $\{\rz_k\}$ and $\{\rt_k\}$, e.g., $\rx_k := \begin{bmatrix} \rz_k' & \rt_k' \end{bmatrix}'$, where $\{\rt_k\}$ is an exogenous process, which does not depend on \playerR's decision $\ru_k$. For certain weight matrices, \eqref{eq:SobjCont} and \eqref{eq:RobjCont} cover the scenarios where \playerR~seeks to regularize $\{\rz_k\}$ around zero vector by minimizing 
\begin{equation}
\Exp\left\{\sum_{k=1}^{\kappa} \rz_{k+1}'\rz_{k+1}^{} + \ru_k'\ru_k^{}\right\},
\end{equation}
whereas \playerS~seeks \playerR~to regularize $\{\rz_{k}\}$ around the exogenous process $\{\rt_k\}$ by minimizing
\begin{equation}
\Exp\left\{\sum_{k=1}^{\kappa} (\rz_{k+1}-\rt_{k+1})'(\rz_{k+1}-\rt_{k+1}) + \ru_k'\ru_k^{}\right\}.
\end{equation}
\end{example}

We define the deception-as-defense game as follows:

\begin{definition}
The {\bf deception-as-defense game} $\game := (\Upsilon,\Gamma,\{\rx_k\},\{\ry_k\},U_{\playerS},U_{\playerR})$ is a Stackelberg game between \playerS~and \playerR, where
\begin{itemize}
\item $\{\rx_k\}$ denotes the information of interest,
\item $\{\ry_k\}$ denotes \playerS's (possibly noisy) measurements of the information of interest,
\item $U_{\playerS}$ and $U_{\playerR}$ are the objective functions of \playerS~and \playerR, defined respectively by \eqref{eq:SobjComm} and \eqref{eq:RobjComm}, or \eqref{eq:SobjCont} and \eqref{eq:RobjCont}.
\end{itemize}
Under the deception model with policy commitment, \playerS~is the leader, who announces (and commits to) his strategies beforehand, while \playerR~is the follower, reacting to the leader's announced strategies. Since \playerR~is the follower and takes actions knowing \playerS's strategy $\eta_{1:\kappa}\in\Upsilon$, we let $B(\eta_{1:\kappa}) \subset \Gamma$ be \playerR's best reaction set to \playerS's strategy $\eta_{1:\kappa}\in\Upsilon$. Then, the strategy and best reaction pair $(\eta_{1:\kappa}^*,B(\eta_{1:\kappa}^*))$ attains the Stackelberg equilibrium provided that
\begin{eqnarray}
&&\eta_{1:\kappa}^* \in\argmin_{\eta_{1:\kappa}\in \Upsilon}\max_{\gamma_{1:\kappa}\in B(\eta_{1:\kappa})} U_{\playerS}(\eta_{1:\kappa},\gamma_{1:\kappa}),\\
&& B(\eta_{1:\kappa}) \;= \argmin_{\gamma_{1:\kappa}\in \Gamma} U_{\playerR}(\eta_{1:\kappa},\gamma_{1:\kappa}).
\end{eqnarray}
\end{definition}

\section{Quadratic Costs and Information of Interest}
\label{sec:Gaussian}

Misaligned {\em quadratic} cost functions, in addition to their various applications, play an essential role in the analysis of the game $\game$. One advantage is that a quadratic cost function can be written as a linear function of the covariance of the posterior estimate of the underlying information of interest. Furthermore, when the information of interest is Gaussian, we can formulate a necessary and sufficient condition on the {\em covariance of the posterior estimate}, which turns out to be just semi-definite matrix inequalities. This leads to an equivalent semi-definite programming (SDP) problem over a {\em finite} dimensional space instead of finding the best signaling rule over an infinite-dimensional policy space. In the following, we elaborate on these observations in further detail. 

Due to the policy commitment, \playerS~needs to anticipate \playerR's reaction to the selected signaling rule $\eta_{1:\kappa}\in\Upsilon$. Here, we will focus on the non-cooperative communication system, and later in Section \ref{sec:control}, we will show how we can transform a non-cooperative control setting into a non-cooperative communication setting under certain conditions. Since the information flow is in only one direction, \playerR~faces the least mean square error problem for given $\eta_{1:\kappa}\in\Upsilon$. Suppose that $R_{\playerR}'R_{\playerR}^{}$ is invertible. Then, the best reaction by \playerR~is given by
\begin{equation}
\gamma_k^*(\rs_{1:k}) = (R_{\playerR}'R_{\playerR}^{})^{-1}R_{\playerR}'Q_{\playerR}^{} \Exp\{\rx_k | \rs_{1:k}\},
\end{equation} 
almost everywhere over $\R^{r}$. Note that the best reaction set $B(\eta_{1:\kappa})$ is a singleton and the best reaction is linear in the posterior estimate $\Exp\{\rx_k | \rs_{1:k}\}$, i.e., the conditional expectation of $\rx_k$ with respect to the random variables $\rs_{1:k}$. When we substitute the best reaction by \playerR~into \playerS's cost function, we obtain
\begin{equation}\label{eq:subsCost}
\sum_{k=1}^{\kappa}\Exp\|Q_{\playerS}\rx_k - M_{\playerS}\Exp\{\rx_k|\rs_{1:\kappa}\}\|^2,
\end{equation}
where $M_{\playerS}^{}:= R_{\playerS}^{}(R_{\playerR}'R_{\playerR}^{})^{-1}R_{\playerR}'Q_{\playerR}^{}$. Since for arbitrary random variables $\ra$ and $\rb$, 
\begin{equation}\label{eq:basic}
\Exp\{\ra\Exp\{\ra|\rb\}\} = \Exp\{\Exp\{\ra|\rb\}\Exp\{\ra|\rb\}\},
\end{equation}
the objective function to be minimized by \playerS, \eqref{eq:subsCost}, can be written as
\begin{equation}\label{eq:short}
\sum_{k=1}^{\kappa}\Exp\|Q_{\playerS}\rx_k - M_{\playerS}\Exp\{\rx_k|\rs_{1:\kappa}\}\|^2 = \sum_{k=1}^{\kappa} \trace\{H_k V\} + c,
\end{equation}
where $H_k := \cov\{\Exp\{\rx_k|\rs_{1:k}\}\}$ denotes the covariance of the posterior estimate, 
\begin{equation}
V := M_{\playerS}'M_{\playerS}^{} - M_{\playerS}'Q_{\playerS}^{} - Q_{\playerS}'M_{\playerS}^{}
\end{equation}
and the constant $c$ is given by
\begin{equation}
c := \sum_{k=1}^{\kappa}\trace\{Q_{\playerS}'Q_{\playerS}^{} \Sigma_k^{}\}.
\end{equation}
We emphasize that $H_k\in\SymMat^m$ is not the posterior covariance, i.e.,  $\cov\{\Exp\{\rx_k|\rs_{1:k}\}\}\neq\cov\{\rx_k|\rs_{1:k}\}$ in general.

The cost function depends on the signaling rule $\eta_{1:\kappa}\in\Upsilon$ only through the covariance matrices $H_{1:\kappa}$ and the cost is an affine function of $H_{1:\kappa}$. By formulating the relation, we can obtain an equivalent finite-dimensional optimization problem over the space of symmetric matrices as an alternative to the infinite-dimensional problem over the policy space $\Upsilon$. Next, we seek to address the following question.

\begin{question}{Relation between $\eta_{1:\kappa}$ and $H_{1:\kappa}$}
What is the relation between the signaling rule $\eta_{1:\kappa}\in\Upsilon$ and the covariance of the posterior estimate $H_{1:\kappa}$? 
\end{question}

Here, we only consider the scenario where \playerS~has access to the underlying information of interest perfectly. We will address the scenarios with partial or noisy measurements in Section \ref{sec:noisy} by transforming that setting to the setting of perfect measurements. 

There are two extreme cases for the shared information: either sharing the information fully without any crafting or sharing no information. The former one implies that the covariance of the posterior estimate would be $\Sigma_k$ whereas the latter one implies that it would be $\cov\{\Exp\{\rx_k|\rs_{1:k-1}\}\}$ since \playerR~has perfect memory. 

\begin{question}{In-between the extremes of sharing everything and sharing nothing}
What would $H_k\in\SymMat^{m}$ be if \playerS~has shared the information only partially?
\end{question}

To address this, if we consider the positive semi-definite matrix $\cov\{\rx_k - \Exp\{\rx_k|\rs_{1:k}\}\}$, by \eqref{eq:basic} we obtain
\begin{equation}\label{eq:upp}
\cov\{\rx_k - \Exp\{\rx_k|\rs_{1:k}\}\} = \Sigma_k - H_k.
\end{equation}
Furthermore, if we consider the positive semi-definite matrix $\cov\{\Exp\{\rx_k|\rs_{1:k}\} - \Exp\{\rx_k|\rs_{1:k-1}\}\}$, by \eqref{eq:basic} we obtain
\begin{equation}\label{eq:bel}
\cov\{\Exp\{\rx_k|\rs_{1:k}\} - \Exp\{\rx_k|\rs_{1:k-1}\}\} = H_k - \cov\{\Exp\{\rx_k|\rs_{1:k-1}\}\}.
\end{equation}
Therefore, based on \eqref{eq:upp} and \eqref{eq:bel}, we obtain the necessary condition:
\begin{equation}\label{eq:necessary}
\Sigma_k \succeq H_k \succeq \cov\{\Exp\{\rx_k|\rs_{1:k-1}\}\},
\end{equation}
which is independent of the distribution of the underlying information and the policy space of \playerS. 

\begin{question}{Sufficient Condition}
What would be the sufficient condition? Is the necessary condition on $H_k\in\SymMat^m$ \eqref{eq:necessary} sufficient?
\end{question}

The sufficient condition for arbitrary distributions is an open problem. However, in the following subsection, we show that when information of interest is Gaussian, we can address the challenge and the necessary condition turns out to be sufficient. 

\subsection{Gaussian Information of Interest}

In addition to its use in modeling various uncertain phenomena based on the central limit theorem, Gaussian distribution has special characteristics which make it versatile in various engineering applications, e.g., in communication and control. The deception-as-defense framework is not an exception for the versatility of the Gaussian distribution. As an example, if the information of interest is Gaussian, the optimal signaling rule turns out to be a linear function within the general class of measurable policies, as to be shown in different settings throughout this chapter.

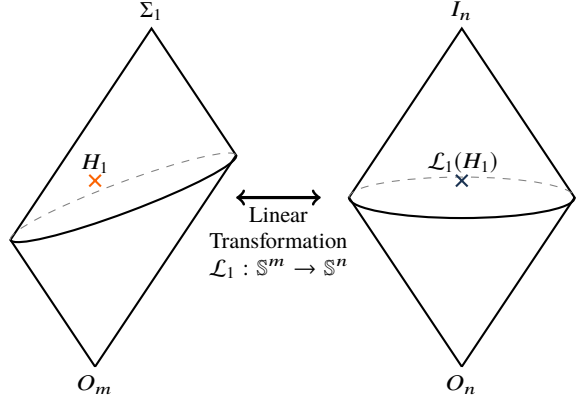
\begin{figure}[t!]
\sidecaption[t]
\begin{tikzpicture}
	\def\scale{.75}
	\def\shift{6.5*\scale}
	\def\w{2*\scale}
	\def\l{3*\scale}
	\def\h{sqrt((2*\w)^2 + (.5*\l)^2)/2}
	\def\angle{atan((.5*\l)/(2*\w))}
	\draw[thick] (0,0) node[below]{$O_m$} -- ({\w*1.25},{\l*1.25}) -- ({\w*.5},{2*\l}) node[above]{$\Sigma_1$} -- ({-\w*.75},{\l*.75}) -- (0,0);
	\draw[thick,rotate around={{\angle}:({-\w*.75},{\l*.75})}] ({-\w*.75},{\l*.75}) arc (-180:0:{\h} and {\h/8});
	\draw[dashed,color=gray,rotate around={{\angle}:({-\w*.75},{\l*.75})}] ({-\w*.75},{\l*.75}) arc (180:0:{\h} and {\h/8});
	\draw[thick] ({\shift},0) node[below]{$O_n$} -- ({\w+\shift},{\l}) -- ({\shift},{2*\l}) node[above]{$I_n$} -- ({-\w+\shift},{\l}) -- ({\shift},0);
	\draw[thick] ({-\w+\shift},{\l}) arc (-180:0:{\w} and {\w/5.5});
	\draw[dashed,color=gray] ({-\w+\shift},{\l}) arc (180:0:{\w} and {\w/5.5});
	\draw ({0},{\l*1.1}) node[cross=3pt,thick,myorange] {} node[above]{$H_1$};
	\draw ({\shift},{\l*1.1}) node[cross=3pt,thick,myblue] {} node[above]{$\calL_1(H_1)$};
	\draw [<->,very thick] ( {1.25*\w},{\l}) -- ({2*\w},{\l}) node[text width=2cm,text centered, midway,below]{Linear\\Transformation\\ $\calL_1:\SymMat^m\rightarrow\SymMat^n$};
\end{tikzpicture}
\caption{A figurative illustration that the covariance of the posterior estimate $H_1$ is bounded from above and below by the semi-cones in the space of symmetric matrices, i.e., $\Sigma_1\succeq H_1 \succeq O_m$. Furthermore, we can transform the space to the form at the right figure through certain linear mapping $\calL_1:\SymMat^m\rightarrow\SymMat^n$, where $n\in\Z$ may be different from $m\in\Z$.}\label{fig:neces}
\end{figure}

Let us first focus on the single-stage setting, where the necessary condition \eqref{eq:necessary} is given as
\begin{equation}\label{eq:neces1}
\Sigma_1 \succeq H_1 \succeq O_m. 
\end{equation}
The convention here is that for arbitrary symmetric matrices $A,B\in\SymMat^m$, $A\succeq B$ means that $A-B\succeq O$, that is positive semi-definite. We further note that the space of positive-semi-definite matrices is a semi-cone \cite{ref:Wolkowicz00}. Correspondingly, Fig. \ref{fig:neces} provides a figurative illustration of \eqref{eq:neces1}, where $H_1\in\SymMat^m$ is bounded from both below and above by certain semi-cones in the space of symmetric matrices.

With a certain linear transformation bijective over \eqref{eq:neces1}, denoted by $\calL_1 : \SymMat^m \rightarrow \SymMat^n$, where $n\in\Z$ is not necessarily the same with $m\in\Z$, the necessary condition \eqref{eq:neces1} can be written as
\begin{equation}
I_n \succeq \calL_1(H_1) \succeq O_n.
\end{equation}
As an example of such a linear mapping when $\Sigma_1\in\SymMat^m$ is invertible, we can consider $\calL_1(H_1) = \Sigma_1^{-1/2}H_1\Sigma_1^{-1/2}$ and $n=m$. If $\Sigma_1$ is singular, then the following lemma from \cite{ref:Sayin18e} plays an important role to compute such a linear mapping. 

\begin{lemma}\label{lem:semidef}
Provided that a given positive semi-definite matrix can be partitioned into blocks such that a block at the diagonal is a zero matrix, then certain off-diagonal blocks must also be zero matrices, i.e.,
\begin{equation}
\begin{bmatrix} A&B\\B'& O \end{bmatrix} \succeq O \Leftrightarrow A\succeq O \mbox{ and } B = O.
\end{equation}
\end{lemma}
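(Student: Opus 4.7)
The plan is to prove the two directions separately, with the backward direction being immediate and essentially all of the work going into the forward direction. For ($\Leftarrow$), if $B=O$ and $A\succeq O$, the block matrix is block diagonal with a PSD block and a zero block, hence PSD. So the substance lies in ($\Rightarrow$).

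Assume the full block matrix is PSD. That $A\succeq O$ is routine: any principal submatrix of a PSD matrix inherits positive semi-definiteness, which one sees by restricting the quadratic form to test vectors of the form $\begin{bmatrix} x' & 0 \end{bmatrix}'$, yielding $x'Ax\geq 0$. The real step is to force $B=O$. My plan is to expand the quadratic form associated with an arbitrary test vector $\begin{bmatrix} x' & y' \end{bmatrix}'$,
\begin{equation*}
\begin{bmatrix} x' & y' \end{bmatrix} \begin{bmatrix} A & B \\ B' & O \end{bmatrix} \begin{bmatrix} x \\ y \end{bmatrix} = x'Ax + 2\,x'By \geq 0,
\end{equation*}
and then exploit the mismatch between the quadratic-in-$x$ term $x'Ax$ and the linear-in-$x$ term $2x'By$. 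Specifically, I fix $y$ arbitrarily, substitute $x = -tBy$ for a scalar parameter $t>0$, and reduce the inequality to
\begin{equation*}
t^2 \, y'B'ABy \,-\, 2t\,\|By\|^2 \;\geq\; 0.
\end{equation*}
Dividing by $t>0$ and letting $t\downarrow 0$ forces $\|By\|^2 \leq 0$, hence $By=0$. Since $y$ was arbitrary, this gives $B=O$, which together with $A\succeq O$ completes the proof.

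The main (and essentially only) obstacle is spotting the right test vector; the linear-versus-quadratic-in-$t$ trick is the key idea, and once the substitution is in place the conclusion is forced by taking $t\downarrow 0$. As an alternative route one could invoke the Schur complement characterization of PSD block matrices, which with the bottom-right block equal to $O$ immediately yields $\mathrm{range}(B')\subseteq \mathrm{range}(O)=\{0\}$ and hence $B=O$; but the direct argument above is just as short and keeps the proof self-contained.
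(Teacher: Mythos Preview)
Your proof is correct. The backward direction is trivial, and for the forward direction your substitution $x=-tBy$ followed by dividing through by $t$ and sending $t\downarrow 0$ cleanly isolates the term $-2\|By\|^2$ and forces $B=O$.

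As for comparison with the paper: the paper does not actually supply a proof of this lemma; it simply imports the statement from \cite{ref:Sayin18e} and uses it as a tool in the subsequent derivation of the linear map $\calL_1$. Your argument therefore fills a gap that the chapter leaves to an external reference. The elementary quadratic-form approach you give is standard and fully self-contained, and the Schur-complement alternative you mention would also work (the range condition for the generalized Schur complement with a zero lower-right block gives $B=O$ immediately). Either route is appropriate here.
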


Let the singular $\Sigma_1\in\SymMat^m$ with rank $n<m$ have the eigen-decomposition
\begin{equation}\label{eq:eig}
\Sigma_1 = U_1 \begin{bmatrix} \Lambda_1 & O \\ O & O \end{bmatrix}U_1',
\end{equation}
where $\Lambda_1\succ O_n$. Then, \eqref{eq:neces1} can be written as
\begin{eqnarray}\label{eq:MM}
\begin{bmatrix} \Lambda_1 & O \\ O & O \end{bmatrix} - \begin{bmatrix} N_{1,1} & N_{1,2} \\ N_{1,2}' & N_{2,2} \end{bmatrix} = \begin{bmatrix} \Lambda_1 - N_{1,1} & -N_{1,2} \\ -N_{1,2}' & -N_{2,2}\end{bmatrix} \succeq O,
\end{eqnarray}
where we let
\begin{equation}
U_1'H_1U_1 = \begin{bmatrix} N_{1,1} & N_{1,2} \\ N_{1,2}' & N_{2,2} \end{bmatrix}
\end{equation}
be the corresponding partitioning, i.e., $N_{1,1}\in\SymMat^n$. Since $U_1'H_1U_1\succeq O_m$, the diagonal block $N_{2,2}\in\SymMat^{m-n}$ must be positive semi-definite \cite{ref:Horn85}. Further, \eqref{eq:MM} yields that $-N_{2,2} \succeq O_{m-n}$, which implies that $N_{2,2} = O_{m-n}$. Invoking Lemma \ref{lem:semidef}, we obtain $N_{1,2} = O_{n\times(m-n)}$. Therefore, a linear mapping bijective over \eqref{eq:neces1} is given by
\begin{equation}\label{eq:map}
\calL_1(H_1) = \begin{bmatrix}\Lambda_1^{-1/2} & O_{n\times(m-n)} \end{bmatrix} U_1'H_1U_1 \begin{bmatrix}\Lambda_1^{-1/2} \\ O_{(m-n)\times n} \end{bmatrix},
\end{equation}
where the unitary matrix $U_1 \in\R^{m\times m}$ and the diagonal matrix $\Lambda_1\in\SymMat^n$ are as defined in \eqref{eq:eig}.

\begin{svgraybox}
With the linear mapping \eqref{eq:map} that is bijective over \eqref{eq:neces1}, the necessary condition on $H_1\in\SymMat^m$ can be written as
\begin{eqnarray}
\Sigma_1\succeq H_1 \succeq O_m &\Leftrightarrow& I_n \succeq \calL_1(H_1) \succeq O_n\nn\\
&\Rightarrow& \mbox{Eigenvalues of $\calL_1(H_1)$ are in the closed interval $[0,1]$}\nn
\end{eqnarray}
since the eigenvalues of $I_n$ weakly majorize the eigenvalues of the positive semi-definite $\calL_1(H_1)$ from below \cite{ref:Horn85}. 
\end{svgraybox}

Up to this point, the specific distribution of the information of interest did not play any role. However, for the sufficiency of the condition \eqref{eq:neces1}, Gaussianness of the information of interest plays a crucial role as shown in the following theorem \cite{ref:Sayin19a}.

\begin{theorem}\label{theorem:equivalent}
Consider $m$-variate Gaussian information of interest $\rx_1 \sim \Gaussian (0,\Sigma_1)$. Given any stochastic kernel $\eta_1\in\Upsilon_1$, we have
\begin{equation}
\Sigma_1 \succeq \cov\{\Exp\{\rx_1|\eta_1(\rx_1)\}\} \succeq O_m.
\end{equation}
Furthermore, given any covariance matrix $H_1\in\SymMat^m$ satisfying
\begin{equation}\label{eq:nec}
\Sigma_1 \succeq H_1 \succeq O_m,
\end{equation}
we have that there exists a probabilistic {\bf linear}-in-$\rx_1$ signaling rule
\begin{equation}\label{eq:signal}
\eta_1^{}(\rx_1^{}) = L_1'\rx_1^{} + \rn_1^{},
\end{equation}
where $L_1\in\R^{m\times m}$ and $\rn_1^{} \sim \Gaussian (0,\Sigma_1^o)$ is an independent $m$-variate Gaussian random variable, such that $\cov\{\Exp\{\rx_1|\eta_1(\rx_1)\}\} = H_1$. Let $\calL_1(H_1) \in \SymMat^n$ have the eigen-decomposition $\calL_1^{}(H_1^{}) = \bU_1^{} \bLambda_1^{} \bU_1'$ and $\bLambda_1 = \diag\{\blambda_{1,1},\ldots,\blambda_{1,n}\}$. Then, the corresponding matrix $L_1\in\R^{m\times m}$ and the covariance $\Sigma_1^o \succeq O_m$ are given by
\begin{equation}\label{eq:L1}
L_1 := U_1 \begin{bmatrix} I_n \\ O \end{bmatrix}  \Lambda_1^{-1/2}\bU_1\Lambda_1^o \begin{bmatrix} I_n & O \end{bmatrix},
\end{equation}
where the unitary matrix $U_1\in\R^{m\times m}$ and the diagonal matrix $\Lambda_1\in\SymMat^n$ are as defined in \eqref{eq:eig}, $\Lambda_1^o := \diag\{\lambda_{1,1}^o,\ldots, \lambda_{1,n}^o\}$, $\Sigma_1^o = \diag\{(\sigma_{1,1}^o)^2, \ldots, (\sigma_{1,n}^o)^2, 0, \ldots,0\}$, and
\begin{equation}\label{eq:lambda1}
\frac{(\lambda_{1,i}^o)^2}{(\lambda_{1,i}^o)^2 + (\sigma_{1,i}^o)^2} = \blambda_{1,i} \in[0,1],\;\forall\;i=1,\ldots,n.
\end{equation}
\end{theorem}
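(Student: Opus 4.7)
The necessity part is essentially already in hand from the general analysis preceding the theorem and does not rely on Gaussianness: $\cov\{\Exp\{\rx_1|\eta_1(\rx_1)\}\}$ is a covariance matrix, hence $\succeq O_m$, and applying \eqref{eq:basic} yields \eqref{eq:upp}, i.e., $\cov\{\rx_1 - \Exp\{\rx_1|\eta_1(\rx_1)\}\} = \Sigma_1 - \cov\{\Exp\{\rx_1|\eta_1(\rx_1)\}\}\succeq O_m$, whence $\Sigma_1\succeq\cov\{\Exp\{\rx_1|\eta_1(\rx_1)\}\}$. The substantive content is the sufficiency part, for which Gaussianness is essential, and my plan is to reduce to a canonical normalized problem, solve it coordinate-wise in the eigenbasis of $\calL_1(H_1)$, and push the construction back to the original coordinates.

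The reduction step uses \eqref{eq:eig} together with Lemma \ref{lem:semidef}. Rotating the hypothesis \eqref{eq:nec} into the $U_1$-basis and applying the lemma to the zero block of $\Sigma_1$ forces the off-diagonal and bottom-right blocks of $U_1'H_1U_1$ to vanish, exactly as in the derivation of \eqref{eq:MM}; consequently $U_1'H_1U_1 = \diag\{N_{1,1},O\}$ with $N_{1,1} = \Lambda_1^{1/2}\calL_1(H_1)\Lambda_1^{1/2}$. Defining the normalized variable $\rz_1 := \Lambda_1^{-1/2}\begin{bmatrix}I_n & O\end{bmatrix}U_1'\rx_1 \sim \Gaussian(0,I_n)$, the problem reduces to producing a jointly Gaussian signal whose posterior-mean covariance equals $\calL_1(H_1)\in[O_n,I_n]$.

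With the eigen-decomposition $\calL_1(H_1) = \bU_1\bLambda_1\bU_1'$ from the statement, I propose the linear-Gaussian channel $\tilde{\rs}_1 = \Lambda_1^o \bU_1' \rz_1 + \tilde{\rn}_1$ with $\tilde{\rn}_1 \sim \Gaussian(0,\diag\{(\sigma_{1,1}^o)^2,\ldots,(\sigma_{1,n}^o)^2\})$ independent of $\rz_1$. Joint Gaussianity makes $\Exp\{\rz_1|\tilde{\rs}_1\}$ linear in $\tilde{\rs}_1$, and a direct computation using $\cov\{\rz_1,\tilde{\rs}_1\} = \bU_1\Lambda_1^o$ and $\cov\{\tilde{\rs}_1\} = (\Lambda_1^o)^2 + \diag\{(\sigma_{1,i}^o)^2\}$ yields
\begin{equation*}
\cov\{\Exp\{\rz_1|\tilde{\rs}_1\}\} = \bU_1\,\diag\left\{\frac{(\lambda_{1,i}^o)^2}{(\lambda_{1,i}^o)^2+(\sigma_{1,i}^o)^2}\right\}\bU_1' = \bU_1\bLambda_1\bU_1' = \calL_1(H_1),
\end{equation*}
the middle equality being exactly \eqref{eq:lambda1}. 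Inverting the change of variables $\rx_1\mapsto\rz_1$ and padding $\tilde{\rs}_1$ with zeros in the $(m-n)$ null directions of $\Sigma_1$ produces the $L_1$ of \eqref{eq:L1} and the block-diagonal $\Sigma_1^o$ of the statement; the block-diagonal structure of $U_1'H_1U_1$ established in the reduction step is precisely what guarantees that the inverse transformation reconstructs $H_1$ rather than only its effective top-left block.

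The main obstacle I anticipate is the singular case $\mathrm{rank}(\Sigma_1)=n<m$: without invoking Lemma \ref{lem:semidef} at the outset to annihilate the off-diagonal blocks of $H_1$, one cannot legitimately parameterize admissible $H_1$ by the normalized object $\calL_1(H_1)\in[O_n,I_n]$, and Gaussian conditioning with a rank-deficient signal covariance becomes delicate. The remaining bookkeeping is routine: pairs $(\lambda_{1,i}^o,\sigma_{1,i}^o)$ satisfying \eqref{eq:lambda1} exist for every $\blambda_{1,i}\in[0,1]$, with the endpoints $0$ and $1$ handled by the degenerate choices $\lambda_{1,i}^o=0$ (pure noise) and $\sigma_{1,i}^o=0$ (noiseless), and the resulting linear-plus-independent-Gaussian-noise rule is automatically a measurable stochastic kernel in $\Upsilon_1$.
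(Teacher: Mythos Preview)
Your proposal is correct and follows essentially the same approach as the paper. The paper's own proof is extremely terse: it simply writes the standard Gaussian-conditioning formula $\cov\{\Exp\{\rx_1|L_1'\rx_1+\rn_1\}\}=\Sigma_1 L_1(L_1'\Sigma_1 L_1+\Sigma_1^o)^{\dagger}L_1'\Sigma_1$ and asserts that substituting the specific $L_1$ and $\Sigma_1^o$ from \eqref{eq:L1}--\eqref{eq:lambda1} yields $H_1$; your normalization via $\rz_1$ and the eigenbasis of $\calL_1(H_1)$ is exactly the computation that makes that assertion transparent, and your explicit treatment of the singular case via Lemma~\ref{lem:semidef} is the bookkeeping the paper leaves implicit.
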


\begin{proof}
Note that for Gaussian information and the signaling rule \eqref{eq:signal}, the covariance of the posterior estimate is given by
\begin{equation}
\cov\{\Exp\{\rx_1^{}|L_1'\rx_1^{} + \rn_1^{}\}\} = \Sigma_1^{}L_1^{}(L_1'\Sigma_1^{}L_1^{} + \Sigma_1^o)^{\dagger}L_1'\Sigma_1^{}
\end{equation}
Given $H_1\in\SymMat^m$ satisfying \eqref{eq:nec}, for \eqref{eq:L1} and \eqref{eq:lambda1}, the linear-in-$\rx_1$ signaling rule \eqref{eq:signal} yields that $\cov\{\Exp\{\rx_1^{}|L_1'\rx_1^{} + \rn_1^{}\}\}= H_1^{}$.
\end{proof}

\begin{important}{Implication of Theorem \ref{theorem:equivalent}}
If the underlying information of interest is Gaussian, instead of the functional optimization problem
\begin{equation}\label{eq:functional}
\min_{\eta_1\in\Upsilon} \Exp\|Q_{\playerS}\rx_1 - K_{\playerS}\Exp\{\rx_1|\eta_1(\rx_1)\}\|^2,
\end{equation}
we can consider the equivalent finite-dimensional problem
\begin{equation}\label{eq:SDP}
\min_{S\in\SymMat} \trace\{SV\}, \mbox{ subject to } \Sigma_1\succeq S \succeq O.
\end{equation}
Then, we can compute the optimal signaling rule $\eta_1^*$ corresponding to the solution of \eqref{eq:SDP} via \eqref{eq:signal}-\eqref{eq:lambda1}.
\end{important}

\begin{svgraybox}
Without any need to solve the functional optimization problem \eqref{eq:functional}, Theorem \ref{theorem:equivalent} shows the optimality of the ``linear plus a random variable" signaling rule within the general class of stochastic kernels when the information of interest is Gaussian. 
\end{svgraybox}

\begin{warning}{Versatility of the Equivalence}
Furthermore, a linear signaling rule would still be optimal even when we introduce additional constraints on the covariance of the posterior since the equivalence between \eqref{eq:functional} and \eqref{eq:SDP} is not limited with the equivalence in optimality.
\end{warning}

Recall that the distribution of the underlying information plays a role only in proving the sufficiency of the necessary condition. Therefore, in general, based on only the necessary condition, we have
\begin{equation}
\min_{\eta_1\in\Upsilon} \Exp\|Q_{\playerS}\rx_1 - K_{\playerS}\Exp\{\rx_1|\eta_1(\rx_1)\}\|^2 \geq \min_{S\in\SymMat} \trace\{SV\}, \mbox{ subject to } \Sigma_1\succeq S \succeq O. 
\end{equation}
The equality holds when the information of interest is Gaussian. 

\begin{svgraybox}
Therefore, for fixed covariance $\Sigma_1\in\SymMat^m$, Gaussian distribution is the best one for \playerS~to persuade \playerR~in accordance with his/her deceptive objective, since it yields total {\em freedom} to attain any covariance of the posterior estimate in-between the two extremes $\Sigma_1\succeq H_1 \succeq O$.
\end{svgraybox}

The following counter example shows that the sufficiency of the necessary condition \eqref{eq:nec} holds only in the case of the Gaussian distribution.

\begin{example}{A Counter Example for Arbitrary Distributions}
For a clear demonstration, suppose that $m=2$ and $\Sigma_1 = I_2$, and correspondingly $\rx_1 = \begin{bmatrix} \rx_{1,1} & \rx_{1,2}\end{bmatrix}'$. The covariance matrix $H:=\begin{bmatrix} 1 & 0 \\ 0 & 0 \end{bmatrix}$ satisfies the necessary condition \eqref{eq:nec} since
\begin{equation}
I_2 \succeq H \succeq O_2,
\end{equation}
which implies that the signal $\rs_1$ must be fully informative about $\rx_{1,1}$ without giving any information about $\rx_{1,2}$. Note that $\Sigma_1=I_2$ only implies that $\rx_{1,1}$ and $\rx_{1,2}$ are uncorrelated, yet not necessarily independent for arbitrary distributions. Therefore, if $\rx_{1,1}$ and $\rx_{1,2}$ are uncorrelated but dependent, then any signaling rule cannot attain that covariance of the posterior estimate even though it satisfies the necessary condition.
\end{example}

Let us now consider a Gauss-Markov process, which follows the following first-order auto-regressive recursion
\begin{equation}\label{eq:state}
\rx_{k+1} = A\rx_k + \rw_k,
\end{equation}
where $A\in\R^{m\times m}$ and $\rw_k\sim\Gaussian(0,\Sigma_w)$. For this model, the necessary condition \eqref{eq:necessary} is given by
\begin{equation}\label{eq:necesk}
\Sigma_k \succeq H_k \succeq AH_{k-1}A',
\end{equation}
for $k=2,\ldots,\kappa$. Given $H_{1:k-1}$, let $\Sigma_k - AH_{k-1}A'$ have the eigen-decomposition
\begin{equation}\label{eq:eigk}
\Sigma_k - AH_{k-1}A' = U_k \begin{bmatrix} \Lambda_k & O \\ O & O \end{bmatrix} U_k',
\end{equation}
where $\Lambda_k\succ O_{n_k}$, i.e., $\Sigma_k - AH_{k-1}A'$ has rank $n_k$. The linear transformation $\calL_k:\bigtimes_{i=1}^k\SymMat^m\rightarrow \SymMat^n$ given by
\begin{equation}\label{eq:Lk}
\calL_k(H_{1:k}) = \begin{bmatrix}\Lambda_k^{1/2} & O_{m-n_k} \end{bmatrix} U_k'(H_k - AH_{k-1}A')U_k \begin{bmatrix}\Lambda_k^{1/2} \\ O_{m-n_k} \end{bmatrix}
\end{equation}
is bijective over \eqref{eq:necesk}. With the linear mapping \eqref{eq:Lk}, the necessary condition on $H_{1:\kappa}\in\bigtimes_{i=1}^{\kappa}\SymMat^m$ can be written as
\begin{equation}
\Sigma_k \succeq H_k \succeq AH_{k-1}A' \Leftrightarrow I_{n_k} \succeq \calL_k(H_{1:k}) \succeq O_{n_k},
\end{equation}
which correspondingly yields that $\calL_k(H_{1:k})\in\SymMat^{n_k}$ has eigenvalues in the closed interval $[0,1]$. Then, the following theorem extends the equivalence result of the single-stage to multi-stage ones \cite{ref:Sayin19a}.

\begin{theorem}\label{theorem:multi}
Consider the $m$-variate Gauss-Markov process $\{\rx_k \sim \Gaussian (0,\Sigma_k)\}$ following the state recursion \eqref{eq:state}. Given any stochastic kernel $\eta_k\in\Upsilon_k$ for $k=1,\ldots,\kappa$, we have
\begin{eqnarray}
\Sigma_1 &\succeq& \cov\{\Exp\{\rx_1|\rs_1\}\} \succeq O_m\\
\Sigma_k &\succeq& \cov\{\Exp\{\rx_k|\rs_{1:k}\}\} \succeq A \cov\{\Exp\{\rx_{k-1}|\rs_{1:k-1}\}\} A', \;k=2,\ldots,\kappa.
\end{eqnarray}
Furthermore, given any covariance matrices $H_{1:\kappa}\in\bigtimes_{i=1}^{\kappa}\SymMat^m$ satisfying
\begin{equation}\label{eq:nec}
\Sigma_k \succeq H_k \succeq AH_{k-1}A',
\end{equation}
where $H_0=O_m$, then there exists a probabilistic {\bf linear}-in-$\rx_k$, i.e., {\bf memoryless}, signaling rule 
\begin{equation}
\eta_k^{}(\rx_{1:k}^{}) = L_k'\rx_k^{} + \rn_k^{},
\end{equation}
where $L_k\in\R^{m\times m}$ and $\{\rn_k^{} \sim \Gaussian (0,\Sigma_k^o)\}$ is independently distributed $m$-variate Gaussian process such that $\cov\{\Exp\{\rx_k|\eta_1(\rx_1),\ldots,\eta_{k}(\rx_{1:k})\}\} = H_k$ for all $k=1,\ldots,\kappa$. Given $H_{1:k-1}$, let $\calL_k(H_{1:k}) \in \SymMat^{n_k}$ have the eigen-decomposition $\calL_k^{}(H_{1:k}^{}) = \bU_k^{} \bLambda_k^{} \bU_k'$ and $\bLambda_k = \diag\{\blambda_{k,1},\ldots,\blambda_{k,n_k}\}$. Then, the corresponding matrix $L_k\in\R^{m\times m}$ and the covariance $\Sigma_k^o \succeq O_m$ are given by
\begin{equation}\label{eq:LLk}
L_k := U_k \begin{bmatrix} I_{n_k} \\ O \end{bmatrix}  \Lambda_k^{-1/2}\bU_k^{}\Lambda_k^o \begin{bmatrix} I_{n_k} & O \end{bmatrix},
\end{equation}
where the unitary matrix $U_k\in\R^{m\times m}$ and the diagonal matrix $\Lambda_k\in\SymMat^{n_k}$ are defined in \eqref{eq:eigk}, $\Lambda_k^o := \diag\{\lambda_{k,1}^o,\ldots, \lambda_{k,n_k}^o\}$, $\Sigma_k^o = \diag\{(\sigma_{k,1}^o)^2, \ldots, (\sigma_{k,n_k}^o)^2, 0, \ldots,0\}$, and
\begin{equation}\label{eq:lambdak}
\frac{(\lambda_{k,i}^o)^2}{(\lambda_{k,i}^o)^2 + (\sigma_{k,i}^o)^2} = \blambda_{k,i} \in[0,1],\;\forall\;i=1,\ldots,n_k.
\end{equation}
\end{theorem}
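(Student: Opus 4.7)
The plan is to first establish necessity by direct covariance computations that mirror the single-stage identities \eqref{eq:upp} and \eqref{eq:bel}, and then to prove sufficiency by induction on the stage $k$, reducing each inductive step to Theorem \ref{theorem:equivalent} applied to the innovation $\rx_k - \Exp\{\rx_k|\rs_{1:k-1}\}$.

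For necessity, the upper bound $\Sigma_k \succeq H_k$ is immediate from the positive semi-definiteness of $\cov\{\rx_k - \Exp\{\rx_k|\rs_{1:k}\}\}$ together with \eqref{eq:upp}. For the lower bound, the tower property gives $H_k \succeq \cov\{\Exp\{\rx_k|\rs_{1:k-1}\}\}$, exactly as in \eqref{eq:bel}. Because the process noise $\rw_{k-1}$ is independent of $\rx_{1:k-1}$ and of any exogenous randomization built into $\eta_{1:k-1}$, it is independent of $\rs_{1:k-1}$, so $\Exp\{\rw_{k-1}|\rs_{1:k-1}\} = 0$. Substituting the state recursion \eqref{eq:state} then yields $\Exp\{\rx_k|\rs_{1:k-1}\} = A\,\Exp\{\rx_{k-1}|\rs_{1:k-1}\}$, and hence $\cov\{\Exp\{\rx_k|\rs_{1:k-1}\}\} = A H_{k-1} A'$, which gives the claimed bound.

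For sufficiency I would induct on $k$. The base case $k=1$ is exactly Theorem \ref{theorem:equivalent}. For the inductive step, assume the memoryless linear rules $\eta_1,\dots,\eta_{k-1}$ defined by \eqref{eq:LLk}--\eqref{eq:lambdak} already realize $H_1,\dots,H_{k-1}$. Since the state process is Gauss--Markov and each past signal is a linear function of $\rx_j$ plus independent Gaussian noise, the entire vector $(\rx_{1:k},\rs_{1:k-1})$ is jointly Gaussian. Consequently, the innovation $\tilde{\rx}_k := \rx_k - \Exp\{\rx_k|\rs_{1:k-1}\} = \rx_k - A\,\Exp\{\rx_{k-1}|\rs_{1:k-1}\}$ is a zero-mean Gaussian vector with covariance $\Sigma_k - A H_{k-1} A'$, independent of $\rs_{1:k-1}$. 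A memoryless candidate $\eta_k(\rx_k) = L_k'\rx_k + \rn_k$ with $\rn_k$ drawn independently of everything else can be rewritten as $L_k' A\,\Exp\{\rx_{k-1}|\rs_{1:k-1}\} + L_k'\tilde{\rx}_k + \rn_k$, whose first term is known to the receiver from $\rs_{1:k-1}$. Hence the update from $\rs_{1:k-1}$ to $\rs_{1:k}$ is equivalent to a single-stage disclosure problem in which the underlying Gaussian vector is $\tilde{\rx}_k$ and the required posterior covariance is $H_k - A H_{k-1} A'$. The hypothesis \eqref{eq:nec} becomes $\Sigma_k - A H_{k-1} A' \succeq H_k - A H_{k-1} A' \succeq O_m$, which is exactly the precondition of Theorem \ref{theorem:equivalent} applied to $\tilde{\rx}_k$. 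That theorem delivers the existence of $L_k$ and $\Sigma_k^o$, and the explicit expressions \eqref{eq:LLk}--\eqref{eq:lambdak} are obtained by instantiating \eqref{eq:L1}--\eqref{eq:lambda1} with the eigen-decomposition \eqref{eq:eigk} of $\Sigma_k - A H_{k-1} A'$ and the bijective linear map $\calL_k$ of \eqref{eq:Lk}.

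The main obstacle is the bookkeeping surrounding rank deficiency. When $\Sigma_k - A H_{k-1} A'$ is singular, one cannot simply normalize by its square root; the bijective mapping \eqref{eq:Lk} must be used to translate the matrix inequality into the eigenvalue interval $[0,1]$, and Lemma \ref{lem:semidef} must be invoked at each stage, exactly as in the single-stage proof, to guarantee that the components of $\tilde{\rx}_k$ orthogonal to the range of $\Sigma_k - A H_{k-1} A'$ are automatically zero and therefore do not need to be signaled. Beyond this, care is needed in the inductive hypothesis to insist that $\rn_k$ is chosen independent of $(\rx_{1:\kappa},\rw_{1:\kappa-1},\rn_{1:k-1})$, since this independence is what preserves both the joint Gaussianness used in the reduction and the identity $\Exp\{\rw_{j}|\rs_{1:j}\} = 0$ that the necessity argument relied on.
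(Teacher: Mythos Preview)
Your proposal is correct and follows the same route the paper takes. The paper does not include a separate proof block for Theorem~\ref{theorem:multi}; instead, it establishes the necessary condition \eqref{eq:necesk} in the text preceding the theorem (exactly via \eqref{eq:upp}--\eqref{eq:bel} and the state recursion, as you do), and for sufficiency it simply states the explicit stage-$k$ construction \eqref{eq:LLk}--\eqref{eq:lambdak}, which is nothing other than the single-stage formula \eqref{eq:L1}--\eqref{eq:lambda1} of Theorem~\ref{theorem:equivalent} instantiated with the innovation covariance $\Sigma_k - AH_{k-1}A'$ in place of $\Sigma_1$. Your inductive reduction to Theorem~\ref{theorem:equivalent} applied to $\tilde{\rx}_k=\rx_k-A\,\Exp\{\rx_{k-1}\mid\rs_{1:k-1}\}$ is precisely the verification that this construction works, and your handling of the rank-deficient case via $\calL_k$ and Lemma~\ref{lem:semidef} matches the paper's treatment.
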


\begin{svgraybox}
Without any need to solve the functional optimization problem
\begin{equation}
\min_{\eta_{1:\kappa}\in\Upsilon} \sum_{k=1}^{\kappa} \Exp\|Q_{\playerS}\rx_k - K_{\playerS}\Exp\{\rx_k|\eta_1(\rx_1),\ldots,\eta_k(\rx_{1:k})\}\|^2,
\end{equation}
Theorem \ref{theorem:multi} shows the optimality of the ``linear plus a random variable" signaling rule within the general class of stochastic kernels also in dynamic environments, when the information of interest is Gaussian. 
\end{svgraybox}

\section{Communication Systems}
\label{sec:communication}

In this section, we elaborate further on the deception-as-defense framework in non-cooperative communication systems with a specific focus on Gaussian information of interest. We first note that in this case the optimal signaling rule turns out to be a linear deterministic signaling rule, where \playerS~does not need to introduce additional independent noise on the signal sent. Furthermore, the optimal signaling rule can be computed analytically for the single-stage game \cite{ref:Tamura14}. We also extend the result on the optimality of linear signaling rules to multi-stage ones \cite{ref:Sayin17b}.  

In the single stage setting, by Theorem \ref{theorem:equivalent}, the SDP problem equivalent to the problem \eqref{eq:subsCost} faced by \playerS~is given by
\begin{equation}\label{eq:equi}
\min_{S\in\SymMat^m} \trace\{SV\} \mbox{ subject to } \Sigma_1\succeq S \succeq O_m.
\end{equation}
We can have a closed form solution for the equivalent SDP problem \eqref{eq:subsCost} \cite{ref:Tamura14}. If $\Sigma_1\in\SymMat^m$ has rank $n$, then a change of variable with the linear mapping $\calL_1:\SymMat^m \rightarrow \SymMat^n$ \eqref{eq:map}, e.g., $T := \calL_1(S)$, yields that \eqref{eq:equi} can be written as
\begin{equation}\label{eq:equiCoV}
\min_{T\in\SymMat^n} \trace\{T W\} \mbox{ subject to } I_n \succeq T \succeq O_m,
\end{equation}
where
\begin{equation}
W := \begin{bmatrix} \Lambda_1^{1/2} & O_{n\times (m-n)}^{}\end{bmatrix}U_1' V U_1^{} \begin{bmatrix} \Lambda_1^{1/2} \\ O_{(m-n)\times n}^{} \end{bmatrix}.
\end{equation}
If we multiply each side of the inequalities in the constraint set of \eqref{eq:equiCoV} from left and right with unitary matrices such that the resulting matrices are still symmetric, the semi-definiteness inequality would still hold. Therefore, let the symmetric matrix $W\in\SymMat^n$ have the eigen-decomposition 
\begin{equation}\label{eq:Weig}
W = \begin{bmatrix}U_{+} & U_{-}\end{bmatrix}\begin{bmatrix} \Lambda_+ & O \\ O & -\Lambda_- \end{bmatrix} \begin{bmatrix} U_+' \\ U_-' \end{bmatrix},
\end{equation}
where $\Lambda_+$ and $\Lambda_-$ are positive semi-definite matrices with dimensions $n_+$ and $n_-$. Then \eqref{eq:equiCoV} could be written as
\begin{equation}\label{eq:blocks}
\min\limits_{\substack{T_+\in\SymMat^{n_+},\\ T_-\in\SymMat^{n_-}}} \trace\{T_+\Lambda_+\} - \trace\{T_-\Lambda_-\}\mbox{ subject to } I_{n_+} \succeq T_+\succeq O_{n_+}, I_{n_-} \succeq T_-\succeq O_{n_-}
\end{equation}
and there exists a $T_r\in\R^{n_+\times n_-}$ such that 
\begin{equation}\label{eq:T}
T = \begin{bmatrix} U_+ & U_- \end{bmatrix}\begin{bmatrix} T_{+}^{} & T_r^{} \\ T_r' & T_-^{} \end{bmatrix} \begin{bmatrix} U_+'\\ U_-' \end{bmatrix}
\end{equation}
satisfies the constraint in \eqref{eq:equiCoV}. Then, the following lemma shows that an optimal solution for \eqref{eq:blocks} is given by $T_+^* = O_{n_+}^{}$, $T_r^* = O_{n_+\times n_-}^{}$, and $T_-^* = O_{n_-}^{}$. Therefore, in \eqref{eq:blocks}, the second (negative semi-definite) term $-\trace\{T_{-}\Lambda_-\}$ can be viewed as the aligned part of the objectives whereas the remaining first (positive semi-definite) term $\trace\{T_+\Lambda_+\}$ is the misaligned part.

\begin{lemma}
For arbitrary $I_n\succeq A=[a_{i,j}] \succeq O_n$ and diagonal positive semi-definite $B=\diag\{b_1,\ldots,b_n\}\succeq O_n$, we have
\begin{equation}
0 \leq \trace\{AB\}=\sum_{i=1}^n a_{i,i}b_i \leq \trace\{B\} = \sum_{i=1}^n b_i.
\end{equation}
\end{lemma}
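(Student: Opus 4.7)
The plan is to reduce the matrix inequality $I_n \succeq A \succeq O_n$ to scalar bounds on the diagonal entries of $A$, after which the result follows by termwise comparison against the diagonal weights $b_i$. Concretely, I would first rewrite the trace for diagonal $B = \diag\{b_1,\ldots,b_n\}$ as $\trace\{AB\} = \sum_{i=1}^{n} a_{i,i} b_i$ by direct expansion, which is just the definition of the trace of a product and the diagonal structure of $B$.

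Next I would extract the key scalar fact: for every $i$, $0 \le a_{i,i} \le 1$. The lower bound follows from $A \succeq O_n$ by testing against the standard basis vector: $a_{i,i} = e_i' A e_i \ge 0$. The upper bound follows symmetrically from $I_n - A \succeq O_n$, giving $1 - a_{i,i} = e_i'(I_n - A)e_i \ge 0$. Since the $b_i$ are nonnegative by hypothesis, multiplying the inequality $0 \le a_{i,i} \le 1$ by $b_i$ preserves it, yielding $0 \le a_{i,i} b_i \le b_i$ for every $i$.

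Summing over $i=1,\ldots,n$ then gives $0 \le \sum_{i=1}^{n} a_{i,i} b_i \le \sum_{i=1}^{n} b_i$, and the right-hand side is exactly $\trace\{B\}$. Combined with the trace identity from the first step, this produces the desired chain of inequalities.

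There is essentially no obstacle: the entire proof is the observation that the diagonal entries of a positive semi-definite matrix sandwiched between $O_n$ and $I_n$ lie in $[0,1]$. The only point worth emphasizing is that the diagonality of $B$ is what lets us avoid any cross terms, so that the trace decouples into a pointwise-in-$i$ product and the scalar bound on $a_{i,i}$ can be applied independently in each coordinate.
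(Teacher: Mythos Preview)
Your proof is correct and in fact more elementary than the paper's. Both arguments hinge on the same scalar fact, namely that each diagonal entry $a_{i,i}$ lies in $[0,1]$, but they reach it by different means. The paper proves the lower bound via the identity $\trace\{AB\} = \trace\{A^{1/2}BA^{1/2}\}$, noting that $A^{1/2}BA^{1/2}$ is positive semi-definite; for the upper bound it invokes Schur's theorem (the diagonal of a symmetric matrix is majorized by its eigenvalues) together with the fact that $I_n \succeq A$ forces all eigenvalues of $A$ to be at most $1$. Your approach bypasses both the square-root trick and the majorization machinery by simply evaluating the quadratic forms $e_i'Ae_i$ and $e_i'(I_n-A)e_i$, which gives $0\le a_{i,i}\le 1$ directly. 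The paper's route generalizes more readily to non-diagonal $B$ (the square-root identity works for any $B\succeq O_n$), while yours exploits the diagonality of $B$ from the outset and is the shortest path to this particular statement.
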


\begin{proof}
The left inequality follows since $\trace\{AB\} = \trace\{A^{1/2}BA^{1/2}\}$ while $A^{1/2}BA^{1/2}$ is positive semi-definite. The right inequality follows since the diagonal entries of $A$ are majorized from below by its eigenvalues by Schur Theorem \cite{ref:Horn85} while the eigenvalues of $A$ are weakly majorized from below by the eigenvalues of $I_n$ since $I_n\succeq A$ \cite{ref:Horn85}. 
\end{proof}

Based on \eqref{eq:T}, the solution for \eqref{eq:blocks} implies that the optimal solution for \eqref{eq:equiCoV} is given by 
\begin{equation}
T^* = \begin{bmatrix} U_+ & U_- \end{bmatrix}\begin{bmatrix} O_{n_+} & O_{n_+\times n_-} \\ O_{n_-\times n_+} & I_{n_-}\end{bmatrix}\begin{bmatrix} U_+'\\ U_-' \end{bmatrix}.
\end{equation}
By invoking Theorem \ref{theorem:equivalent} and \eqref{eq:L1}, we obtain the following theorem to compute the optimal signaling rule analytically in single-stage $\game$ (a version of the theorem can be found in \cite{ref:Tamura14}).

\begin{theorem}
Consider a single-stage deception-as-defense game $\game$, where \playerS~and \playerR~have the cost functions \eqref{eq:SobjComm} and \eqref{eq:RobjComm}, respectively. Then, an optimal signaling rule is given by
\begin{equation}\label{eq:optSignal}
\eta_1^*(\rx_1^{}) = \begin{bmatrix} I_n \\ O_{(m-n)\times n} \end{bmatrix} \begin{bmatrix} O_{n_+\times n} \\ U_-' \end{bmatrix} \Lambda_1^{-1/2} \begin{bmatrix} I_n & O_{n\times (m-n)} \end{bmatrix} U_1' \rx_1,
\end{equation}
almost everywhere over $\R^m$. The matrices $U_1\in\R^{m\times m}$, $\Lambda_1\in\SymMat^{n}$ are as defined in \eqref{eq:eig}, and $U_- \in \R^{n\times n_-}$ is as defined in \eqref{eq:Weig}.
\end{theorem}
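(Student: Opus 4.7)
The plan is to consolidate the reductions already developed in Section \ref{sec:Gaussian} and close the argument with the block decomposition sketched in the surrounding text. First I would start from the fact that, after substituting \playerR's unique best reaction into $U_{\playerS}$ and invoking the identity \eqref{eq:short}, the single-stage problem of minimizing $U_{\playerS}$ over $\eta_1 \in \Upsilon_1$ reduces to minimizing $\trace\{H_1 V\}$ over the admissible covariances $H_1$ of the posterior estimate. By Theorem \ref{theorem:equivalent}, the set of admissible $H_1$ (as $\eta_1$ ranges over all stochastic kernels) coincides with the spectrahedron $\{S \in \SymMat^m : \Sigma_1 \succeq S \succeq O_m\}$, so the task collapses to the finite-dimensional SDP \eqref{eq:equi}.

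Next I would apply the bijective change of variables $T = \calL_1(S)$, with $\calL_1$ as in \eqref{eq:map}, to bring the SDP into the normalized form \eqref{eq:equiCoV}, and then use the eigendecomposition \eqref{eq:Weig} of the transformed cost matrix $W$ to partition the unknown $T$ into blocks $(T_+, T_-, T_r)$ aligned with the positive and negative eigenspaces of $W$, exactly as in \eqref{eq:T}. The objective becomes $\trace\{T_+ \Lambda_+\} - \trace\{T_- \Lambda_-\}$, which is independent of the coupling block $T_r$. By the trace bound established in the preceding lemma applied to each of $T_+$ and $T_-$, this expression is minimized by driving the first term to its lower bound $0$ through $T_+^* = O_{n_+}$ and the second term to its upper bound $\trace\{\Lambda_-\}$ through $T_-^* = I_{n_-}$.

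The third step is a feasibility check. With $T_+^* = O$ on the upper-left diagonal block of $T$, Lemma \ref{lem:semidef} applied to $T \succeq O_n$ forces $T_r = O_{n_+ \times n_-}$; the resulting matrix $T^* = U_- U_-'$ also satisfies $I_n - T^* = U_+ U_+' \succeq O_n$, so it lies in the feasible set of \eqref{eq:equiCoV} and is therefore the unique minimizer. I would then pull $T^*$ back through $\calL_1^{-1}$ to recover the optimal posterior covariance $H_1^*$ and feed it into the linear construction \eqref{eq:signal}--\eqref{eq:lambda1} from Theorem \ref{theorem:equivalent}. Since the eigenvalues $\bar{\lambda}_{1,i}$ of $\calL_1(H_1^*) = T^*$ lie in $\{0,1\}$, equation \eqref{eq:lambda1} permits the choice $\sigma_{1,i}^o = 0$ for every $i$, so the optimal signaling rule is noise-free (deterministic) and in particular the stochastic kernel collapses to a deterministic linear map. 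Substituting these degenerate parameters into \eqref{eq:L1}, with the $1$-eigenspace of $T^*$ spanned by the columns of $U_-$ and the $0$-eigenspace by those of $U_+$, yields precisely the closed form \eqref{eq:optSignal}.

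The main obstacle I anticipate is the last bookkeeping step: choosing the free parameters $\Lambda_1^o$ consistently so that the generic formula \eqref{eq:L1} collapses to the compact expression \eqref{eq:optSignal} without spurious contributions from the $0$-eigenspace, and properly absorbing the zero padding between the $n$-dimensional whitened coordinates and the original $m$-dimensional signal space. The rest is a direct application of results already available: the SDP equivalence of Theorem \ref{theorem:equivalent}, Lemma \ref{lem:semidef} for the feasibility of the block structure, and the trace inequality lemma for optimality.
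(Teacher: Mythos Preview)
Your proposal is correct and follows essentially the same route as the paper's derivation preceding the theorem: reduce to the SDP \eqref{eq:equi}, whiten via $\calL_1$, diagonalize $W$, optimize the two blocks separately via the trace lemma, and read off the linear rule from Theorem~\ref{theorem:equivalent}. Your treatment is in fact slightly more careful than the paper's on the feasibility side---you explicitly invoke Lemma~\ref{lem:semidef} to force $T_r=O$ once $T_+^*=O$ and then verify $I_n-T^*=U_+U_+'\succeq O$, whereas the paper simply asserts the block values---but the one overstatement to drop is ``unique minimizer'': if $\Lambda_+$ or $\Lambda_-$ has zero eigenvalues the SDP has a nontrivial optimal face, and the theorem only claims \emph{an} optimal rule.
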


Note that the optimal signaling rule \eqref{eq:optSignal} does not include any  additional noise term. The following corollary shows that the optimal signaling rule does not include additional noise when $\kappa>1$ as well (versions of this theorem can be found in \cite{ref:Sayin17b} and \cite{ref:Sayin18e}).

\begin{corollary}\label{corollary:multi}
Consider a deception-as-defense game $\game$, where the exogenous Gaussian information of interest follows the first-order autoregressive model \eqref{eq:state}, and the players \playerS~and \playerR~have the cost functions \eqref{eq:SobjComm} and \eqref{eq:RobjComm}, respectively. Then, for the optimal solution $S_{1:\kappa}^* \in \bigtimes_{k=1}^{\kappa}\SymMat^m$ of the equivalent problem, $P_k := \calL_k(S_{1:k}^*)$ is a symmetric idempotent matrix, which implies that the eigenvalues of $P_k\in\SymMat^{n_k}$ are either $0$ or $1$. Let $n_{k,1}\in\Z$ denote the rank of $P_k$, and $P_k$ have the eigen-decomposition 
\begin{equation}
P_k = \begin{bmatrix} U_{k,0} & U_{k,1} \end{bmatrix} \begin{bmatrix} O_{n_k-n_{k,1}} &  \\  & I_{n_{k,1}} \end{bmatrix} \begin{bmatrix} U_{k,0}' \\ U_{k,1}' \end{bmatrix}.
\end{equation}
Then, the optimal signaling rule is given by
\begin{equation}\label{eq:optSignalMulti}
\eta_k^*(\rx_{1:k}^{}) =\begin{bmatrix} I_{n_k} \\ O_{(m-n_k)\times n_k} \end{bmatrix}\begin{bmatrix} O_{(n_k-n_{k,1})\times n_k} \\ U_{k,1}'\end{bmatrix} \Lambda_k^{-1/2} \begin{bmatrix} I_{n_k} & O_{n_k \times (m-{n_k})}\end{bmatrix} U_k'\rx_k,
\end{equation}
almost everywhere over $\R^m$, for $k=1,\ldots,\kappa$. The unitary matrix $U_k\in\R^{m\times m}$ and the diagonal matrix $\Lambda_k\in\SymMat^{n_k}$ are defined in \eqref{eq:eigk}.
\end{corollary}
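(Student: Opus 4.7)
The plan is to combine the SDP reformulation from Theorem~\ref{theorem:multi} with the eigendecomposition argument that yielded the single-stage optimal rule \eqref{eq:optSignal}, extending it across stages via a backward-induction / Lagrangian argument, and then reading off the signaling rule by substituting idempotent $P_k$'s into the formulas of Theorem~\ref{theorem:multi}.

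First, by Theorem~\ref{theorem:multi}, minimizing \playerS's cost over all measurable signaling rules $\eta_{1:\kappa}\in\Upsilon$ is equivalent to the finite-dimensional SDP
\begin{equation*}
\min_{S_{1:\kappa}} \sum_{k=1}^{\kappa}\trace\{S_k V\} + c \qquad\text{subject to}\qquad \Sigma_k \succeq S_k \succeq A S_{k-1} A',\; S_0 = O_m.
\end{equation*}
The objective is linear and the feasible region is compact and convex, so an optimum exists and can be sought at an extreme point.

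Second, I would propagate the single-stage result through a backward induction in $k$. Under the change of variable $S_k = A S_{k-1} A' + M_k$ with $O \preceq M_k \preceq \Sigma_k - A S_{k-1} A'$ and the bijective map $\calL_k$, each stage's problem becomes $\min_{T_k} \trace\{T_k \tilde{W}_k\}$ subject to $O_{n_k} \preceq T_k \preceq I_{n_k}$ for a suitable transformed $\tilde{W}_k$. The inter-stage coupling is handled via Lagrangian duality: introducing PSD multipliers $\Psi_k,\Phi_k$ for the lower and upper SDP constraints, stationarity yields the backward recursion $\Psi_k = V + A'\Psi_{k+1}A + \Phi_k$, $\Psi_{\kappa+1}=O$, so that the effective cost matrix at stage $k$ is $W_k := \Psi_k - \Phi_k$, decomposed as the difference of two PSD matrices. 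Eigendecomposing the $\calL_k$-transform of $W_k$ into its positive and negative parts exactly as in the single-stage derivation of \eqref{eq:optSignal}, and invoking the lemma $0 \leq \trace\{AB\} \leq \trace\{B\}$ for $O \preceq A \preceq I$ with diagonal PSD $B$, complementary slackness forces $T_k^* = P_k$ to be the orthogonal projection onto the negative eigenspace; in particular $P_k$ is symmetric and idempotent.

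Third, once each $P_k$ is known to have eigenvalues in $\{0,1\}$, equation \eqref{eq:lambdak} is satisfied by $(\sigma_{k,i}^o)^2 = 0$ for every $i$: when $\blambda_{k,i}=1$ any $\lambda_{k,i}^o \neq 0$ works with zero noise, and when $\blambda_{k,i}=0$ one may take $\lambda_{k,i}^o = 0$, again with zero noise. Substituting back into \eqref{eq:LLk} and using the block eigendecomposition $P_k = U_{k,0} O U_{k,0}' + U_{k,1} I_{n_{k,1}} U_{k,1}'$ supplied in the statement collapses the stochastic linear rule of Theorem~\ref{theorem:multi} to the deterministic form \eqref{eq:optSignalMulti}, which completes the proof.

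The main obstacle is the inter-stage coupling, since both the feasible set for $S_k$ and the eigendecomposition defining $\calL_k$ depend on $S_{k-1}$, so the single-stage argument cannot be applied naively per stage. I expect the KKT route above to be the cleanest way around this: the backward recursion for $\Psi_k$ produces a stage-wise effective cost matrix that plays precisely the role of $V$ in the single-stage proof, while complementary slackness simultaneously yields idempotency of each $P_k$ and the decoupling that makes the induction rigorous. A pure primal extreme-point argument, showing that the extreme points of the multi-stage feasible set are exactly those with all $P_k$ idempotent, is an alternative but looks more technical.
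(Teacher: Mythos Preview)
The paper does not supply a proof of this corollary; it only states the result and cites \cite{ref:Sayin17b,ref:Sayin18e}. So there is no in-text argument to compare against directly. Your first and third steps are fine: Theorem~\ref{theorem:multi} gives exactly the SDP you write down, and once each $P_k$ is known to be idempotent, substituting $\blambda_{k,i}\in\{0,1\}$ into \eqref{eq:lambdak}--\eqref{eq:LLk} kills the noise term and yields \eqref{eq:optSignalMulti} verbatim.

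The gap is in Step~2, where two different arguments get conflated. The trace lemma was used in the single-stage proof as a \emph{direct computation}: with $W$ known explicitly, one reads off $T^*$ as the projector onto the negative eigenspace. In your multi-stage KKT route, the analogue $\tilde W_k=\tilde\Psi_k-\tilde\Phi_k$ depends on the dual variables, which are themselves part of the unknown solution; the lemma no longer furnishes a computation, only a consistency condition, and invoking it here is spurious. What actually forces idempotency is complementary slackness by itself: from $\tilde\Psi_kT_k=0$ and $\tilde\Phi_k(I-T_k)=0$ with $\tilde\Psi_k,\tilde\Phi_k\succeq0$, any eigenvector of $T_k$ with eigenvalue in $(0,1)$ lies in the null spaces of both multipliers and hence of $\tilde W_k$. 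Generically this excludes interior eigenvalues; when $\tilde W_k$ is singular you must argue separately that among the optima one may \emph{choose} $T_k$ idempotent. Your sketch does not make this separation. There is also a bookkeeping issue you glide over: $\calL_k$ is built from $S_{k-1}^*$, so the phrase ``the $\calL_k$-transform of $W_k$'' mixes primal and dual quantities; the congruence that carries $\Psi_k,\Phi_k$ to $\tilde\Psi_k,\tilde\Phi_k$ has to be written out before the complementary-slackness step can even be stated cleanly.

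The primal extreme-point route you set aside as ``more technical'' is actually the one closest in spirit to the paper's single-stage treatment and avoids the circularity above: linearity of $\sum_k\trace\{S_kV\}$ over the compact convex feasible set forces an extreme optimizer, and one then shows (by perturbing at the first index $j$ where $P_j$ fails to be idempotent and propagating $A^{k-j}D_j(A')^{k-j}$ forward so that each $S_k-AS_{k-1}A'$ is unchanged for $k>j$) that every extreme point has all $P_k$ idempotent. Either route can be completed, but the KKT sketch as written needs the repairs above before it constitutes a proof.
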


\section{Control Systems}
\label{sec:control}

The deception-as-defense framework also covers the non-cooperative control settings including a sensor observing the state of the system and a controller driving the system based on the sensor outputs according to certain quadratic control objectives, e.g., \eqref{eq:RobjCont}. Under the general game setting where the players can select any measurable policy, the control setting cannot be transformed into a communication setting straight-forwardly since the problem features non-classical information due to the asymmetry of information between the players and the dynamic interaction through closed-loop feedback signals, which leads to two-way information flow rather than one-way flow as in the communication setting in Section \ref{sec:communication}. However, the control setting can be transformed into a non-cooperative communication setting under certain conditions, e.g., when signaling rules are restricted to be linear plus a random term.

Consider a controlled Gauss-Markov process following the recursion \eqref{eq:stateControlled}, and with players \playerS~and \playerR~seeking to minimize the quadratic control objectives \eqref{eq:SobjCont} and \eqref{eq:RobjCont}, respectively. Then, by completing to squares, the cost functions \eqref{eq:SobjCont} and \eqref{eq:RobjCont} can be written as
\begin{equation}\label{eq:square}
\Exp\left\{\sum_{k=1}^{\kappa} \rx_{k+1}'Q_{j}^{}\rx_{k}^{} + \ru_k'R_{j}^{}\ru_k^{}\right\} = \sum_{k=1}^{\kappa}\Exp\|K_{j,k}\rx_k+\ru_k\|_{\Delta_{j,k}}^2 + \delta_{j,0},
\end{equation}
where $j=\playerS,\playerR$, and
\begin{eqnarray}
&K_{j,k}& = \Delta_{j,k}^{-1}B'\tQ_{j,k+1}A\\
&\Delta_{j,k}& = B'\tQ_{j,k+1} B + R_{j}\\
&\delta_{j,0}& = \trace\{Q_{j}\Sigma_1\} + \sum_{k=1}^{\kappa}\trace\{\tQ_{j,k+1}\Sigma_w\}
\end{eqnarray}
and $\{\tQ_{j,k}\}$ follows the discrete-time dynamic Riccati equation:
\begin{equation}
\tQ_{j,k} = Q_j + A'(\tQ_{j,k+1}-\tQ_{j,k+1}B\Delta_{j,k}^{-1}B'\tQ_{j,k+1})A,
\end{equation}
and $\tQ_{j,\kappa+1}= Q_{j}$.

On the right-hand side of \eqref{eq:square}, the state depends on the control input $\ru_{j,k}$, for $j=\playerS,\playerR$, however, a routine change of variables yields that
\begin{equation}\label{eq:change}
\sum_{k=1}^{\kappa}\Exp\|K_{j,k}\rx_k+\ru_k\|_{\Delta_{j,k}}^2 = \sum_{k=1}^{\kappa} \Exp\|K_{j,k}^{}\rx_k^o + \ru_k^o\|_{\Delta_{j,k}}^2,
\end{equation}
where we have introduced the control-free, i.e., exogenous, process $\{\rx_k^o\}$ following the first-order auto-regressive model
\begin{equation}\label{eq:free}
\rx_{k+1}^o = A\rx_k^o + \rw_k^{}, \; k=1,\ldots,\kappa,\mbox{ and }\rx_1^o = \rx_1^{},
\end{equation}
and a linearly transformed control input
\begin{equation}\label{eq:trans}
\ru_k^o = \ru_k^{} + K_{j,k}^{}B\ru_{k-1}^{} + \ldots + K_{j,k}^{}A^{k-2}B\ru_1^{}.
\end{equation}

\begin{important}{Non-classical Information Scheme under General Game Settings}
The right-hand side of \eqref{eq:change} resembles the cost functions in the communication setting, which may imply separability over the horizon and for
\begin{equation}
\Exp\|K_{j,k}^{}\rx_k^o + \ru_k^o\|_{\Delta_{j,k}}^2,
\end{equation}
the optimal transformed control input is given by $\ru_{k}^o = -K_{j,k}\Exp\{\rx_k^o| \rs_{1:k}^{}\}$ and the corresponding optimal control input could be computed by reversing the transformation \eqref{eq:trans}. However, here, the control rule constructs the control input based on the sensor outputs, which are chosen strategically by the non-cooperating \playerS~while \playerS~constructs the sensor outputs based on the actual state, which is driven by the control input, rather than the control-free state. Therefore, \playerR~can have impact on the sensor outputs by having an impact on the actual state. Therefore, the game $\game$ under the general setting features a non-classical information scheme. However, if \playerS's strategies are restricted to linear policies $\eta_k^{\ell}\in\Upsilon_k^{\ell}\subset\Upsilon_k$, given by
\begin{equation}
\eta^{\ell}_k(\rx_{1:k}) = L_{k,k}'\rx_k + \ldots + L_{k,1}'\rx_1 + \rn_k,
\end{equation}
then we have
\begin{eqnarray}
\Exp\{\rx_k^o | L_{k,k}'\rx_k + &\ldots& + L_{k,1}'\rx_1 + \rn_k,\ldots,L_{1,1}'\rx_1 + \rn_1\}\\
&=& \Exp\{\rx_k^o | L_{k,k}'\rx_k^o + \ldots + L_{k,1}'\rx_1^o + \rn_k,\ldots,L_{1,1}'\rx_1^o + \rn_1\}
\end{eqnarray}
since by \eqref{eq:stateControlled} and \eqref{eq:free}, the signal $\rs_i$ for $i=1,\ldots,k$ can be written as
\begin{equation}
\rs_i = L_{i,i}'\rx_i^o + \ldots + L_{1,1}'\rx_1^o + \rn_i + \underbrace{L_{i,i}'B\ru_{i-1} + \ldots + (L_{i,i}'A^{i-2}+\ldots+L_{i,2}')B\ru_1}_{\mbox{$\sigma$-$\rs_{1:i-1}$ measurable}}.
\end{equation} 
Therefore, for a given ``linear plus noise" signaling rule, the optimal transformed control input is given by $\ru_{k}^o = -K_{j,k}\Exp\{\rx_k^o| \rs_{1:k}^{}\}$.
\end{important}

In order to reverse the transformation on the control input and to provide a compact representation, we introduce
\begin{eqnarray}
\Phi_j := \begin{bmatrix} I & K_{j,\kappa}B & K_{j,\kappa}AB & \cdots & K_{j,\kappa}A^{\kappa-2}B \\ & I & K_{j,\kappa-1}B & \cdots & K_{j,\kappa-1}A^{\kappa-3}B \\ & & I & \cdots & K_{j,\kappa-2}A^{\kappa-4}B \\
& & & \ddots & \vdots \\ & & & & I \end{bmatrix},
\end{eqnarray} 
and block diagonal matrices 
\begin{equation}
K_j := \diag\{K_{j,\kappa},\ldots,K_{j,1}\} \mbox{ and } \Delta_j := \diag\{\Delta_{j,\kappa},\ldots,\Delta_{j,1}\}.
\end{equation} 
Then, \eqref{eq:change} can be written as
\begin{equation}
\sum_{k=1}^{\kappa} \Exp\|K_{j,k}^{}\rx_k^o + \ru_k^o\|_{\Delta_{j,k}}^2 = \Exp\|K_j \rx^o + \Phi_j \ru\|_{\Delta_j}^2
\end{equation}
where we have introduced the augmented vectors $\ru = \begin{bmatrix} \ru_{\kappa}' & \cdots & \ru_{1}'\end{bmatrix}'$ and $\rx^o = \begin{bmatrix} (\rx_{\kappa}^o)' & \cdots & (\rx_{1}^o)'\end{bmatrix}'$.

To recap, \playerS~and \playerR~seek to minimize, respectively, the following cost functions
\begin{eqnarray}
&U_{\playerS}&(\eta_{1:\kappa}^{\ell}, \gamma_{1:\kappa}^{}) = \Exp\|K_{\playerS}\rx^o + \Phi_{\playerS} \ru\|_{\Delta_{\playerS}}^2 + \delta_{\playerS,0},\\
&U_{\playerR}&(\eta_{1:\kappa}^{\ell}, \gamma_{1:\kappa}^{}) = \Exp\|K_{\playerR}\rx^o + \Phi_{\playerR} \ru\|_{\Delta_{\playerR}}^2 + \delta_{\playerR,0}.
\end{eqnarray}
We note the resemblance to the communication setting. Therefore following the same lines, \playerS~faces the following problem:
\begin{equation}\label{eq:SDPCont}
\min_{\eta_{1:\kappa}^{\ell} \in \Upsilon^{\ell}} \sum_{k=1}^{\kappa} \trace\{\cov\{\Exp\{\rx_k^o| \rs_{1:k}\}\}V_k\} + v_o,
\end{equation}
where $v_o := \trace\{\cov\{\rx^o(\rx^o)\}K_{\playerS}'\Delta_{\playerS}^{}K_{\playerS}^{}\} + \delta_{\playerS,0}$ and
\begin{equation}
V_k = \Xi_{k,k} + \sum_{i=k+1}^{\kappa} \Xi_{k,i}A^{i-k} + (A^{i-k})'\Xi_{i,k},
\end{equation}
where $\Xi_{k,i}\in\R^{m\times m}$ is an $m\times m$ block of $\Xi\in\R^{m\kappa \times m\kappa}$, with indexing starting from the right-bottom to the left-top, and
\begin{equation}
\Xi := M_{\playerS}'\Delta_{\playerS}M_{\playerS} - M_{\playerS}'\Delta_{\playerS}K_{\playerS} - K_{\playerS}'\Delta_{\playerS}M_{\playerS},
\end{equation}
where $M_{\playerS}^{}:=\Phi_{\playerS}^{}\Phi_{\playerR}^{-1}K_{\playerR}^{}$.

\begin{svgraybox}
The optimal linear signaling rule in control systems can be computed according to Corollary \ref{corollary:multi} based on \eqref{eq:SDPCont}.
\end{svgraybox}

\section{Uncertainty in the Uninformed Agent's Objective}
\label{sec:uncertainty}

In the deception-as-defense game $\game$, the objectives of the players are common knowledge. However, there might be scenarios where the objective of the uninformed attacker may not be known precisely by the informed defender. In this section, our goal is to extend the results in the previous sections for such scenarios with uncertainties. To this end, we consider that \playerR~has a private type $\omega\in\Omega$ governing his/her cost function and $\Omega$ is a finite set of types. For a known type of \playerR, e.g., $\omega\in\Omega$, as shown in both communication and control settings, the problem faced by the informed agent $\playerS$ can be written in an equivalent form as
\begin{equation}\label{eq:ssh}
\min_{\eta_{1:\kappa}\in\Upsilon} \sum_{k=1}^{\kappa} \trace\{H_k V_{\omega,k}\} + v_o,
\end{equation}
for certain symmetric matrices $V_{\omega,k}\in\SymMat^m$, which depend on \playerR's objective and correspondingly his/her type. If the distribution governing the type of \playerR, e.g., $\{p_{\omega}\}_{\omega\in\Omega}$, where $p_{\omega}$ denotes the probability of type $\omega\in\Omega$, were known, then the equivalence result would still hold straight-forwardly when we consider 
\begin{equation}
V_k := \sum_{\omega\in\Omega} p_{\omega} V_{\omega,k}
\end{equation} 
since \eqref{eq:ssh} is linear in $V_{\omega,k}\in\SymMat^m$. For the scenarios where the distribution governing the type of \playerR~is not known, we can defend against the worst possible distribution over the types in a robust way. In the following, we define the corresponding {\em robust} deception-as-defense game.
 
\begin{definition}
The {\bf robust deception-as-defense game} 
\begin{equation}
\game^r := (\Upsilon,\Gamma,\Omega,\{\rx_k\},\{\ry_k\},U_{\playerS}^{r},U_{\playerR}^{\omega})
\end{equation} 
is a Stackelberg game \cite{ref:Basar99} between \playerS~and \playerR, where
\begin{itemize}
\item $\Omega$ denotes the type set of \playerR,
\item $\{\rx_k\}$ denotes the information of interest,
\item $\{\ry_k\}$ denotes \playerS's (possibly noisy) measurements of the information of interest,
\item $U_{\playerS}^r$ and $U_{\playerR}^{\omega}$ are the objective functions of \playerS~and \playerR, derived based on \eqref{eq:SobjComm} and \eqref{eq:RobjComm}, or \eqref{eq:SobjCont} and \eqref{eq:RobjCont}.
\end{itemize}
In this hierarchical setting, \playerS~is the leader, who announces (and commits to) his strategies beforehand, while \playerR~stands for followers of different types, reacting to the leader's announced strategy. Players type-$\omega$ \playerR~and \playerS~select the strategies $\gamma_{1:\kappa}^{\omega}\in\Gamma$ and $\eta_{1:\kappa}\in\Upsilon$ to minimize the cost functions $U_{\playerR}^{\omega}(\eta_{1:\kappa}^{},\gamma_{1:\kappa}^{\omega})$ and
\begin{equation}
U_{\playerS}^r(\eta_{1:\kappa}^{},\{\gamma_{1:\kappa}^{\omega}\}_{\omega\in\Omega}) = \max_{p\in\Delta^{|\Omega|}} \sum_{\omega\in\Omega} p_{\omega} U_{\playerS}(\eta_{1:\kappa}^{},\gamma_{1:\kappa}^{\omega}).
\end{equation}
Type-$\omega$ \playerR~selects his/her strategy knowing \playerS's strategy $\eta_{1:\kappa}\in\Upsilon$. Let $B^{\omega}(\eta_{1:\kappa}) \subset \Gamma$ be type-$\omega$ \playerR's best reaction set to \playerS's strategy $\eta_{1:\kappa}\in\Upsilon$. Then, the strategy and best reactions pair $(\eta_{1:\kappa}^*,\{B^{\omega}(\eta_{1:\kappa}^*)\}_{\omega\in\Omega})$ attains the Stackelberg equilibrium provided that
\begin{eqnarray}
&&\eta_{1:\kappa}^* \in\argmin_{\eta_{1:\kappa}\in \Upsilon} \max\limits_{\substack{\gamma_{1:\kappa}^{\omega}\in B^{\omega}(\eta_{1:\kappa}),\\ \omega\in\Omega}} U_{\playerS}^r(\eta_{1:\kappa}^{},\{\gamma^{\omega}_{1:\kappa}\}_{\omega\in\Omega}),\label{eq:SErobust}\\
&& B^{\omega}(\eta_{1:\kappa}) \;= \argmin_{\gamma_{1:\kappa}\in \Gamma} U_{\playerR}^{\omega}(\eta_{1:\kappa}^{},\gamma_{1:\kappa}^{\omega}).
\end{eqnarray}
\end{definition}

Suppose \playerS~has access to the perfect measurement of the state. Then, in the robust deception-as-defense game $\game^r$, the equivalence result in Theorem \ref{theorem:multi} yields that the problem faced by \playerS~can be written as
\begin{equation}\label{eq:shortRobust}
\min_{S\in\Psi} \max_{p\in\Delta^{|\Omega|}} \trace\left\{S\sum_{\omega\in\Omega} p_{\omega} V_{\omega} \right\} + v_o,
\end{equation}
where we have introduced the block diagonal matrices $S:=\diag\{S_{\kappa},\ldots,S_1\}$ and $V_{\omega}:= \diag\{V_{\omega,\kappa},\ldots,V_{\omega,1}\}$, and $\Psi\subset\SymMat^{m\kappa}$ denotes the constraint set at this new high-dimensional space corresponding to the necessary and sufficient condition on the covariance of the posterior estimate. The following theorem from \cite{ref:Sayin19a} provides an algorithm to compute the optimal signaling rules within the general class of measurable policies for the communication setting, and the optimal ``linear plus noise" signaling rules for the control setting. 

\begin{theorem}
The value of the Stackelberg equilibrium \eqref{eq:SErobust}, i.e., \eqref{eq:shortRobust}, is given by $\vartheta = \min_{\omega\in\Omega} \vartheta_{\omega}$, where
\begin{equation}
\vartheta_{\omega} := \min_{S\in\Psi} \trace\{SV_{\omega}\} + v_o, \mbox{ subject to } \trace\{(V_{\omega} - V_{\omega_o})S\} \geq 0\;\forall \,\omega_o\neq \omega.
\end{equation}
Furthermore, let $\omega^*\in \argmin_{\omega\in\Omega} \vartheta_{\omega}$ and
\begin{equation}
S^* \in \argmin_{S\in\Psi} \trace\{SV_{\omega^*}\} + v_o, \mbox{ subject to } \trace\{(V_{\omega^*}-V_{\omega_o})S\} \geq 0\;\forall \, \omega_o \neq \omega^*. 
\end{equation}
Then, given $S^*\in\Psi$, we can compute the optimal signaling rule according to the equivalence result in Theorem \ref{theorem:multi}.
\end{theorem}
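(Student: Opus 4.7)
The plan is to first collapse the inner maximization over the simplex $\Delta^{|\Omega|}$ to a maximization over the finite type set $\Omega$, and then partition the feasible region $\Psi$ according to which type achieves the maximum, so that each piece becomes a constrained SDP of the stated form.

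First I would observe that, since $\sum_{\omega} p_\omega \, \trace\{SV_\omega\}$ is linear in $p$ on the compact convex simplex $\Delta^{|\Omega|}$, the inner maximum is attained at a vertex. Hence
\begin{equation*}
\max_{p\in\Delta^{|\Omega|}} \trace\!\Big\{S\sum_{\omega\in\Omega} p_\omega V_\omega\Big\} = \max_{\omega\in\Omega} \trace\{SV_\omega\},
\end{equation*}
reducing \eqref{eq:shortRobust} to $\min_{S\in\Psi}\max_{\omega\in\Omega}\trace\{SV_\omega\} + v_o$.

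Next I would decompose $\Psi$ into the level sets $\Psi_\omega := \{S\in\Psi : \trace\{(V_\omega - V_{\omega_o})S\}\geq 0\ \forall\,\omega_o\neq\omega\}$, which are exactly the matrices for which the max over types is achieved (tie-breaking arbitrarily) at $\omega$. By construction $\Psi = \bigcup_{\omega\in\Omega}\Psi_\omega$, and on $\Psi_\omega$ one has $\max_{\omega'\in\Omega}\trace\{SV_{\omega'}\} = \trace\{SV_\omega\}$. Therefore
\begin{equation*}
\min_{S\in\Psi}\max_{\omega\in\Omega}\trace\{SV_\omega\} + v_o = \min_{\omega\in\Omega}\, \min_{S\in\Psi_\omega} \trace\{SV_\omega\} + v_o = \min_{\omega\in\Omega}\vartheta_\omega,
\end{equation*}
which is the claimed value of the Stackelberg equilibrium. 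Selecting $\omega^*\in\argmin_\omega\vartheta_\omega$ and a minimizer $S^*\in\Psi_{\omega^*}$ of the corresponding SDP then yields a covariance matrix realizing $\vartheta$, and Theorem~\ref{theorem:multi} converts $S^*$ into the explicit ``linear plus noise'' signaling rule.

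I do not expect a serious obstacle: the argument is essentially that the worst-case distribution is always a Dirac mass, plus a partitioning of the feasible set by the identity of the worst type. The only minor care needed is with the boundary between the $\Psi_\omega$'s (where several types tie for the maximum), but any tie-breaking rule places the shared point in some $\Psi_\omega$, so the union covers $\Psi$ and the two-level minimum is unaffected. The equivalence of the reformulation with Theorem~\ref{theorem:multi} is what justifies the final step of recovering the signaling rule from $S^*$, and no extra work beyond citing that theorem is required.
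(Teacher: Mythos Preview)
Your proposal is correct and follows essentially the same approach as the paper: reduce the inner maximization over $\Delta^{|\Omega|}$ to a maximum over the finite vertex set, then show the outer minimum decomposes into the constrained problems $\vartheta_\omega$. Your forward partition argument ($\Psi=\bigcup_\omega\Psi_\omega$) is a slightly cleaner packaging of what the paper does by working backward from the structure of an optimal pair $(S^*,p^*)$ and noting that any type receiving positive weight in $p^*$ must achieve the maximum.
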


\begin{proof}
There exists a solution for the equivalent problem \eqref{eq:shortRobust} since the constraint sets are decoupled and compact while the objective function is continuous in the optimization arguments. Let $(S^*,p^*)$ be a solution of \eqref{eq:shortRobust}. Then, $p^*\in\Delta^{|\Omega|}$ is given by
\begin{equation}\label{eq:pStar}
p^* \in \left\{p\in\Delta^{|\Omega|} | p_{\omega} = 0 \mbox{ if } \trace\{V_{\omega}S^*\} < \max_{\omega_o\neq \omega} \trace\{V_{\omega_o}S^*\}\right\}
\end{equation} 
since the objective in \eqref{eq:shortRobust} is linear in $p\in\Delta^{|\Omega|}$. Since $p^*\in\Delta^{|\Omega|}$, i.e., a point over the simplex $\Delta^{|\Omega|}$, there exists at least one type with positive weight, e.g., $p_{\omega}^* > 0$. Then, \eqref{eq:pStar} yields 
\begin{equation}
\trace\{V_{\omega}S^*\} \geq \trace\{V_{\omega_o}S^*\},\;\forall\,\omega_o\in\Omega
\end{equation} 
and furthermore
\begin{equation}
\trace\{V_{\omega}S^*\} = \sum_{\omega_o\in\Omega}p_{\omega_o}\trace\{V_{\omega_o}S^*\},
\end{equation}
since for all $\omega_o\in\Omega$ such that $p_{\omega_o}>0$, we have $\trace\{V_{\omega_o}S^*\}=\trace\{V_{\omega}S^*\}$. Therefore, given the knowledge that in the solution $p_{\omega}^*>0$, we can write \eqref{eq:shortRobust} as
\begin{eqnarray}
\min_{S\in\Psi} \max_{p\in\Delta^{|\Omega|}} \trace\left\{S\sum_{\omega_o\in\Omega} p_{\omega_o} V_{\omega_o} \right\} + v_o =&& \min_{S\in\Psi}\trace\{V_{\omega}S\} \\
&&\mbox{s.t. } \trace\{(V_{\omega}-V_{\omega_o})S\} \geq 0\;\forall\,\omega_o\in\Omega.\nn
\end{eqnarray}
To mitigate the necessity $p_{\omega}^*>0$ in the solution of the left-hand-side, we can search over the finite set $\Omega$ since in the solution at least one type must have positive weight, which completes the proof.
\end{proof}

\begin{warning}{Irrelevant Information in Signals}
The optimization objective in \eqref{eq:shortRobust} is given by
\begin{equation}
\max_{p\in\Delta^{|\Omega|}} \trace\left\{S\sum_{\omega\in\Omega} p_{\omega} V_{\omega} \right\} + v_o,
\end{equation}
which is convex in $S\in\Psi$ since the maximum of any family of linear functions is a convex function \cite{ref:Boyd04}. Therefore, the solution $S^*\in\Psi$ may be a non-extreme point of the constraint set $\Psi$, which implies that in the optimal signaling rule \playerS~introduces independent noise. Note that Blackwell's irrelevant information theorem \cite{ref:Blackwell63,ref:Blackwell62} implies that there must also be some other (nonlinear) signaling rule within the general class of measurable policies that can attain the equilibrium without introducing any independent noise.
\end{warning}

\section{Partial or Noisy Measurements}
\label{sec:noisy}

Up to now, we have considered the scenario where \playerS~has perfect access to the underlying information of interest, but had mentioned at the beginning that results are extendable also to partial or noisy measurements, e.g.,
\begin{equation}
\ry_k = C\rx_k + \rv_k,
\end{equation}
where $C\in\R^{m\times m}$ and $\rv_k\sim\Gaussian(0,\Sigma_v)$ is Gaussian measurement noise independent of all the other parameters. In this section, we discuss these extensions, which hold under certain restrictions on \playerS's strategy space. More precisely, for ``linear plus noise" signaling rules $\eta_k^{\ell}\in\Upsilon_k^{\ell}$, $k=1,\ldots,\kappa$, the equivalence results in Theorems \ref{theorem:equivalent} and \ref{theorem:multi} hold in terms of the covariance of the posterior estimate of all the previous measurements\footnote{With some abuse of notation, we denote the vector $\begin{bmatrix} \ry_{\kappa}'  & \cdots & \ry_1' \end{bmatrix}'$ by $\ry_{1:k}\in\R^{mk}$.}, denoted by $Y_k := \cov\{\Exp\{\ry_{1:k}|\rs_{1:k}\}\}$, rather than the covariance of the posterior estimate of the underlying state $H_k = \cov\{\Exp\{\rx_k|\rs_{1:k}\}\}$. Particularly, the following lemma from \cite{ref:Sayin18d} shows that there exists a linear relation between the covariance matrices $H_k\in\SymMat^{m}$ and $Y_k\in\SymMat^{mk}$ since $\rx_k \rightarrow \ry_{1:k} \rightarrow \rs_{1:k}$ forms a Markov chain in that order.

\begin{lemma}\label{lemma:Markov}
Consider zero-mean jointly Gaussian random vectors $\rx,\ry,\rs$ that form a Markov chain, e.g., $\rx\rightarrow \ry \rightarrow \rs$ in this order. Then, the conditional expectations of $\rx$ and $\ry$ given $\rs$ satisfy the following linear relation:
\begin{equation}
\Exp\{\rx|\rs\} = \Exp\{\rx\ry'\}\Exp\{\ry\ry'\}^{\dagger}\Exp\{\ry|\rs\}.
\end{equation}
\end{lemma}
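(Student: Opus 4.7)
The plan is to exploit three ingredients in sequence: (i) the linearity of conditional expectation for jointly Gaussian zero-mean vectors, (ii) the Markov property $\rx\to\ry\to\rs$, which for Gaussians is equivalent to the conditional-independence identity $\Exp\{\rx|\ry,\rs\}=\Exp\{\rx|\ry\}$, and (iii) the tower property of conditional expectation. Each step is short; the work is in lining them up correctly so that the pseudo-inverse $\Exp\{\ry\ry'\}^\dagger$ is legitimately used when $\ry$ has singular covariance.

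First I would write down the Gaussian regression formula for $\Exp\{\rx|\ry\}$. Since $(\rx,\ry)$ is jointly Gaussian and zero-mean, $\Exp\{\rx|\ry\}$ is an affine function of $\ry$, and the coefficient is characterized by the normal equations. In the non-degenerate case this gives $\Exp\{\rx|\ry\}=\Exp\{\rx\ry'\}\Exp\{\ry\ry'\}^{-1}\ry$; in the degenerate case I would note that $\ry$ lies almost surely in the range of $\Exp\{\ry\ry'\}$, so the Moore--Penrose pseudo-inverse gives the correct version $\Exp\{\rx|\ry\}=\Exp\{\rx\ry'\}\Exp\{\ry\ry'\}^\dagger\ry$, since the formula agrees with the orthogonal projection of $\rx$ onto the linear span of the entries of $\ry$.

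Next I would invoke the Markov property. The hypothesis $\rx\to\ry\to\rs$ means $\rx$ is conditionally independent of $\rs$ given $\ry$, which for jointly Gaussian vectors implies
\begin{equation}
\Exp\{\rx|\ry,\rs\}=\Exp\{\rx|\ry\}=\Exp\{\rx\ry'\}\Exp\{\ry\ry'\}^\dagger \ry.
\end{equation}
Applying the tower property with the nested $\sigma$-algebras $\sigma(\rs)\subset\sigma(\ry,\rs)$ yields
\begin{equation}
\Exp\{\rx|\rs\}=\Exp\bigl\{\Exp\{\rx|\ry,\rs\}\,\big|\,\rs\bigr\}=\Exp\bigl\{\Exp\{\rx\ry'\}\Exp\{\ry\ry'\}^\dagger \ry\,\big|\,\rs\bigr\},
\end{equation}
and pulling the deterministic matrix $\Exp\{\rx\ry'\}\Exp\{\ry\ry'\}^\dagger$ out of the inner conditional expectation gives the claimed identity.

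The only subtle point I expect to have to defend is the use of the pseudo-inverse when $\Exp\{\ry\ry'\}$ is singular; I would handle it by arguing on the range space (or, equivalently, by restricting attention to a maximal linearly independent subset of coordinates of $\ry$ and then padding with zeros, which is exactly what $\dagger$ accomplishes). Aside from that, the derivation is a direct chain of three textbook facts and should fit in a few lines.
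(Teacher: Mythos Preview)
The paper does not actually supply a proof of this lemma; it merely states the result and attributes it to the reference \cite{ref:Sayin18d}. So there is no in-paper argument to compare against.

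That said, your proposed proof is correct and is the standard route for this type of result. The three ingredients you identify---the Gaussian regression formula $\Exp\{\rx\mid\ry\}=\Exp\{\rx\ry'\}\Exp\{\ry\ry'\}^\dagger\ry$, the Markov identity $\Exp\{\rx\mid\ry,\rs\}=\Exp\{\rx\mid\ry\}$, and the tower property---combine exactly as you describe, and your treatment of the singular case via the pseudo-inverse (noting that $\ry$ lies almost surely in the range of its covariance) is the right justification. Nothing is missing.
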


Note that $\rs_{1:k}$ is jointly Gaussian with $\rx_k$ and $\ry_{1:k}$ since $\eta_i^{\ell}\in\Upsilon_i^{\ell}$, for $i=1,\ldots,k$. Based on Lemma \ref{lemma:Markov}, the covariance matrices $H_k\in\SymMat^{m}$ and $Y_k\in\SymMat^{mk}$ satisfy
\begin{equation}
H_k^{} = D_k^{} Y_k^{} D_k',
\end{equation}
where $D_k^{} := \Exp\{\rx_k^{}\ry_{1:k}'\}\Exp\{\ry_{1:k}^{}\ry_{1:k}'\}^{\dagger} \in \R^{m\times mk}$. Furthermore, $\ry_{1:k}\in\R^{mk}$ follows the first-order auto-regressive recursion:
\begin{equation}\label{eq:Ystate}
\ry_{1:k} = \underbrace{\begin{bmatrix} \Exp\{\ry_k\ry_{1:k-1}'\}\Exp\{\ry_{1:k-1}\ry_{1:k-1}\}^{\dagger} \\ I_{m(k-1)} \end{bmatrix}}_{=:A_k^y} \ry_{1:k-1} + \begin{bmatrix} \ry_k - \Exp\{\ry_k|\ry_{1:k-1}\} \\ \vZeros_{m(k-1)} \end{bmatrix}.
\end{equation}
Therefore, the optimization problem faced by \playerS~can be viewed as belonging to the non-cooperative communication setting with perfect measurements for the Gauss-Markov process $\{\ry_{1:k}\}$ following the recursion \eqref{eq:Ystate}, and it can be written as
\begin{equation}
\min_{\eta_{1:\kappa}^{\ell}\in\Upsilon^{\ell}} \sum_{k=1}^{\kappa}\trace\{Y_kW_k\} + v_o,
\end{equation}
where $W_k^{} := D_k' V_k^{} D_k^{}$.

\begin{important}{Dimension of Signal Space}
Without loss of generality, we can suppose that the signal $\rs_k$ sent by \playerS~is $mk$ dimensional so that \playerS~can disclose $\ry_{1:k}$. To distinguish the introduced auxiliary signaling rule from the actual signaling rule $\eta_k^{\ell}$, we denote it by $\teta_k^{\ell}\in\tilde{\Upsilon}_k^{\ell}$ and the policy space $\tilde{\Upsilon}_k^{\ell}$ is defined accordingly. When the information of interest is Gaussian, for a given optimal $\teta_{1:i}^{\ell}$, we can always set the $i$th optimal signaling rule $\eta_{i}^{\ell}(\cdot)$ in the original signal space $\Upsilon_i^{\ell}$ as
\begin{equation}\label{eq:higher}
\eta_i^{\ell}(\ry_{1:i}) = \Exp\{\rx_{i}|\teta_1^{\ell}(\ry_1),\ldots,\teta_i^{\ell}(\ry_{1:i})\},
\end{equation}
almost everywhere over $\R^m$, and the right-hand-side is the conditional expectation of $\rx_i$ with respect to the random variables $\teta_1^{\ell}(\ry_1),\ldots,\teta_i^{\ell}(\ry_{1:i})$. Then, for $k=1,\ldots,\kappa$, we would obtain
\begin{equation}
\Exp\{\rx_k|\eta_1^{\ell}(\ry_1),\ldots,\eta_k^{\ell}(\ry_{1:k})\} =  \Exp\{\rx_{k}|\teta_1^{\ell}(\ry_1),\ldots,\teta_k^{\ell}(\ry_{1:k})\},
\end{equation}
almost everywhere over $\R^m$, since for $\eta_{1:\kappa}^{\ell}\in\Upsilon^{\ell}$ selected according to \eqref{eq:higher}, all the previously sent signals $\{\eta_{1}^{\ell}(\ry_1),\ldots,\eta_{k-1}^{\ell}(\ry_{1:k-1})\}$ are $\sigma$-$\{\teta_1^{\ell}(\ry_1),\ldots,\teta_{k-1}^{\ell}(\ry_{1:k-1})\}$ measurable. 
\end{important}

Based on this observation, for partial or noisy measurements, we have the equivalent problem
\begin{equation}
\min_{Y_{1:\kappa}\in\bigtimes_{k=1}^{\kappa}\SymMat^{mk}} \sum_{k=1}^{\kappa}\trace\{Y_kW_k\},\mbox{ subject to } \cov\{\ry_{1:k}\} \succeq Y_k \succeq A_k^y Y_{k-1} (A_k^y)',
\end{equation}
where $Y_0 = 0$. Given the solution $Y_{1:\kappa}^*$, we can compute the corresponding signaling rules $\teta_{1:\kappa}^{\ell}$ according to Theorem \ref{theorem:multi} and then the actual optimal signaling rule $\eta_{1:\kappa}^{\ell}\in\Upsilon^{\ell}$ can be computed by \eqref{eq:higher}.

\section{Conclusion}
\label{sec:conclusion}

In this chapter, we have introduced the deception-as-defense framework for cyber-physical systems. A rational adversary takes certain actions to carry out a malicious task based on the available information. By crafting the information available to the adversary, our goal was to control him/her to take actions inadvertently in line with the system's interest. Especially, when the malicious and benign objectives are not completely opposite of each other, as in a zero-sum game framework, we have sought to restrain the adversary to take actions, or attack the system, carrying out only the aligned part of the objectives as much as possible without meeting the goals of the misaligned part. To this end, we have adopted the solution concept of game theoretical hierarchical equilibrium for robust formulation against the possibility that advanced adversaries can learn the defense policy in the course of time once it has been widely deployed.      

We have shown that the problem faced by the defender can be written as a linear function of the covariance of the posterior estimate of the underlying state. For arbitrary distributions over the underlying state, we have formulated a necessary condition on the covariance of the posterior estimate. Then, for Gaussian state, we have shown the sufficiency of that condition since for any given symmetric matrix satisfying the necessary condition, there exists a ``linear plus noise" signaling rule yielding that covariance of the posterior estimate. Based on that, we have formulated an SDP problem over the space of symmetric matrices equivalent to the problem faced by the defender over the space of signaling rules. We have first focused on the communication setting. This equivalence result has implied the optimality of linear signaling rules within the general class of stochastic kernels. We have provided the optimal signaling rule for single stage settings analytically and provided an algorithm to compute the optimal signaling rules for dynamic settings numerically. Then, we have extended the results to control settings, where the adversary has a long-term control objective, by transforming the problem into a communication setting by restricting the space of signaling rules to linear policies plus a random term. We have also addressed the scenarios where the objective of the adversary is not known and the defender can have partial or noisy measurements of the state.

Some future directions of research include formulation of the deception-as-defense framework for
\begin{itemize}
\item robust control of systems,
\item communication or control systems with quadratic objectives over infinite horizon,
\item networked control systems, where there are multiple informed and uninformed agents,
\item scenarios where the uninformed adversary can have side-information,
\item applications in sensor selection. 
\end{itemize}    

\section*{Acknowledgement}
This research was supported by the U.S. Office of Naval Research (ONR) MURI grant N00014-16-1-2710.

\bibliographystyle{spbasic}
\bibliography{../ref}

\end{document}